\DeclareMathAlphabet{\mathpzc}{OT1}{pzc}{m}{it}
\theoremstyle{plain}
\newtheorem{theorem}{Theorem}
\newtheorem{lemma}{Lemma} \newtheorem{corollary}{Corollary}
\newtheorem{proposition}{Proposition}
\theoremstyle{definition}
\newcommand{\ContractMenu}{\ensuremath{\mathscr{M}}}
\newcommand{\typeNum}{\ensuremath{N}}
\newcommand{\numType}{\typeNum}
\newcommand{\SelFunPlain}{\ensuremath{\varphi}}
\newcommand{\SelFun}{\ensuremath{\SelFunPlain_{\ContractMenu}}}
\newcommand{\reportSupp}{\ensuremath{\mathcal{P}}}
\newcommand{\figdir}{./Figure}
\newcommand{\tdr}{\ensuremath{\operatorname{TDR}}}
\newcommand{\Gfun}{\ensuremath{\mathcal{G}}}
\newcommand{\gstar}[1]{\ensuremath{g^*_{#1}}}
\newcommand{\Score}{\ensuremath{\mathcal{S}}}
\newcommand{\eps}[1]{\ensuremath{\specfunc{(#1)}}}
\newcommand{\Interval}{\ensuremath{\mathcal{I}}}
\newcommand{\intleft}{\ensuremath{\ell}}
\newcommand{\intright}{\ensuremath{r}}
\newcommand{\specfunc}{\ensuremath{\epsilon}}
\newcommand{\PsiBase}{\ensuremath{\smallsub{\Psi}{\mbox{base}}}}
\newcommand{\TypeDist}{\typeDist}
\newcommand{\BayesRisk}{\ensuremath{\operatorname{BR}}}
\newcommand{\mygraybox}[1]{
{\begin{tcolorbox}[
    colback=gray!10,       
    colframe=gray!70,      
    arc=6pt,               
    boxrule=0.8pt,         
    left=8pt,              
    right=8pt,             
    top=6pt,               
    bottom=6pt,             
    parbox=false
]
#1
  \end{tcolorbox}
  }
}
\begin{document}

\begin{center}

  {\bf \Large Instance-Adaptive Hypothesis Tests with Heterogeneous Agents} \\

   \vspace*{.2in}
  \begin{tabular}{ccc}
    Flora C. Shi$^\dagger$ & Martin J. Wainwright$^{\dagger, \star}$ &
    Stephen Bates$^\dagger$
  \end{tabular}
  
  \vspace*{0.1in}
  
  \begin{tabular}{c}
    Laboratory for Information and Decision Systems \\
    Statistics and Data Science Center \\
    EECS$^{\dagger}$ and Mathematics$^{\star}$ \\
    Massachusetts Institute of Technology
  \end{tabular}
  
  \vspace*{.2in}

  \begin{abstract}

  We study hypothesis testing over a heterogeneous population of
  strategic agents with private information. Any single test applied
  uniformly across the population yields statistical error that is
  sub-optimal relative to the performance of an oracle given access to
  the private information.  We show how it is possible to design menus
  of statistical contracts that pair type-optimal tests with payoff
  structures, inducing agents to self-select according to their
  private information. This separating menu elicits agent types and
  enables the principal to match the oracle performance even without a
  priori knowledge of the agent type.  Our main result fully
  characterizes the collection of all separating menus that are
  instance-adaptive, matching oracle performance for an arbitrary
  population of heterogeneous agents. We identify designs where
  information elicitation is essentially costless, requiring
  negligible additional expense relative to a single-test benchmark,
  while improving statistical performance. Our work establishes a
  connection between proper scoring rules and menu design, showing how
  the structure of the hypothesis test constrains the elicitable
  information. Numerical examples illustrate the geometry of
  separating menus and the improvements they deliver in error
  trade-offs. Overall, our results connect statistical decision theory
  with mechanism design, demonstrating how heterogeneity and strategic
  participation can be harnessed to improve efficiency in hypothesis
  testing.
  \end{abstract}
\end{center}


\section{Introduction}
\label{SecIntro}

Hypothesis testing involves making a discrete choice and constitutes
the most basic form of decision-making under uncertainty. It occupies
a central role in practice, serving as an assessment of evidence in
most scientific research.  Testing, however, rarely occurs in
isolation.  In a broader system-level context, outcomes of a test have
downstream implications, such as deploying a new drug, adding a new
feature to a commercial product, or publishing a scientific finding.
While all parties involved are affected, stakeholders may be
differentially impacted by the results. In the other direction,
upstream of the phase of data collection, these same parties may have
additional, potentially private information about the object of study
and make choices that affect the data generation. For example, a
pharmaceutical company sponsoring a phase III clinical trial for a
candidate drug implicitly has a belief about the drug's efficacy and
chooses to run a trial accordingly.  In this way, the distribution of
the data observed, and thus the performance of any method for
statistical decision-making, is altered by strategic behavior of the
agents involved.

In this paper, we study hypothesis testing based on data collected
from a heterogeneous population of agents, and analyze how their
strategic behavior impacts the design of optimal statistical tests.
More specifically, we consider a \emph{principal} that seeks to carry
out a collection of hypothesis tests. Each test is controlled by an
\emph{agent} that possesses private information---not known to the
principal---and this private information can differ across agents.
Our main contribution is to show how to design a protocol for
hypothesis testing that yields statistically optimal results even when
the private information of the agents is \emph{not known} in advance.

As a simple illustration, suppose there are two agent types: a
``good'' type with a high prior probability of being non-null and a
``bad'' type with a low prior probability of being non-null. The
decision-maker evaluates agents using $p$-value thresholds, which
determine the stringency of the test: smaller thresholds reduce false
positives but also reduce true discoveries.  For a heterogeneous
collection of tests, one natural measure of aggregate performance is
based on the false discovery rate and true discovery rate; cf.
equations~\eqref{EqnDefnFDR} and~\eqref{EqnDefnETD}. The false
discovery rate (FDR) corresponds to the fraction of rejections that
are false, while the true discovery rate (TDR) corresponds to the rate of
true discoveries across the mixture of types, and captures a notion of
overall power.

\begin{figure}[htbp]
    \centering
    \includegraphics[width=0.6\textwidth]{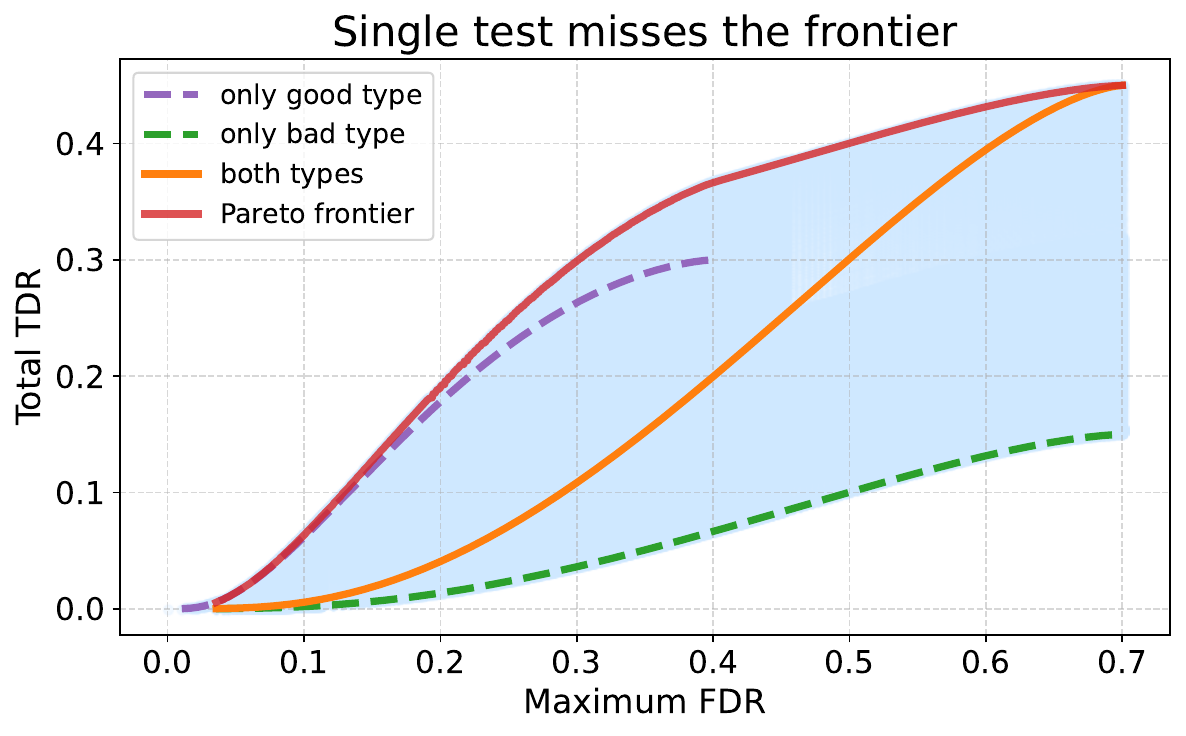}
    \caption{trade-offs between FDR and TDR for a population with two
      agent types: ``good'' and ``bad''. The light-blue region
      corresponds to all FDR-TDR pairs that can be achieved by any
      testing protocol. Its upper boundary (solid red) defines the
      Pareto frontier that can be achieved by an oracle that knows the
      agent type associated with each observation, and applies
      different test thresholds to each.  The orange solid curve shows
      the result of applying a single uniform test to the full agent
      population.  The dotted lines show the performance of applying a
      single test to only the ``good'' agents (purple line), and only
      the ``bad'' agents (green line); applying these two protocols
      also requires knowledge of the agent type.}
    \label{FigFrontier}
\end{figure}

\Cref{FigFrontier} illustrates how heterogeneity undermines
efficiency. The light-blue region shows all achievable (FDR, TDR)
pairs when the decision-maker is permitted to apply different tests to
each agent type; the upper boundary of this region (marked in a solid
red line) corresponds to the \emph{oracle Pareto frontier} that could
be achieved if the principal were given access to the hidden agent
type.  A single threshold applied uniformly across both types (orange
curve) lies substantially below the Pareto frontier.  We can also
compare to two other single threshold protocols that do require
knowledge of agent type.  Applying a test to only the ``good'' agents
yields the purple curve; it has a more favorable trade-off, but still
lies below the Pareto frontier.  Testing only the ``bad'' type (green
curve) defines the lower boundary of the achievable region.

When the agent types are known, it is possible to tailor thresholds to
different agent types so as to balance error trade-offs more
effectively: moderate thresholds for good type to increase power, and
stricter thresholds for bad types to limit false discoveries.  It is
exactly this tailored choice of thresholds that yields the oracle
Pareto frontier (red curve).  However, in practice, agent types are
\emph{not} directly observable, so we are led to the following question:

\mygraybox{{\bf{Question:}} \emph{Is it possible to match the
  statistical performance of the oracle even when agent types are
  unknown a priori?}}

The main contribution of this paper is to answer this question in a constructive way: we both characterize when it is possible to match the oracle and provide a concrete procedure for constructing testing protocols that do so. The key insight in our work is that, because agents are assumed to behave strategically, their actions can reveal private information. We leverage this by offering a menu of possible statistical tests, each paired with an associated payoff structure---together forming what we call a contract. We show that it is possible to design a menu of contracts such that an agent’s choice of contract reveals their private information, and moreover, such that each agent selects the statistical threshold that is optimal for their type. This design yields a testing protocol whose statistical error trade-offs match the oracle Pareto frontier, even though the principal has no \emph{a priori} knowledge of agent types.


\subsection{Our contributions}

We study a game-theoretic model of hypothesis testing with a
heterogeneous population of strategic agents. The decision-maker
(principal) designs a hypothesis testing protocol, while agents decide
whether to participate based on private information and expected
payoffs. Participation requires a fixed cost, and agents receive a
reward if they pass the test. We formalize this interaction through a
menu of statistical contracts, as defined precisely in
equation~\eqref{EqnContractMenu}, where each contract specifies the
testing protocol, the upfront cost, and the reward.

There is an evolving line of work on principal-agent
testing~\citep{tetenov2016economic, bates2022principalagent,
  bates2023incentivetheoretic, shi2024sharp, hossain2025,
  viviano2024modelmultiplehypothesistesting}, and within this context,
our analysis makes the following contributions:

\setlength{\leftmargini}{1em} 
\begin{itemize}
    \item {\bf Menus for heterogeneous agents.} Prior studies largely
      focus on single agents or uniform tests that control only false
      discoveries. We instead analyze heterogeneous populations and
      the optimal trade-off between the two competing types of
      statistical error. We show that the principal can implement a
      menu of statistical contracts to fully elicit agents’ private
      information and assign type-optimal thresholds, thereby matching
      the statistical performance of an oracle given access to agent
      types in advance.  Our main result (\Cref{th:psr_menu}) fully
      characterizes all separating menus that have this
      instance-adaptive optimality.
    \item {\bf Costs of elicitation.} A natural concern is whether
      eliciting private information requires substantial financial
      resources. We show that the principal can construct menus (see
      \Cref{co:Gfun_varyingR}, illustrated in \Cref{FigPSRVarying})
      that achieve statistical efficiency while incurring arbitrarily
      small financial cost relative to offering a single
      contract. This establishes that optimal error trade-offs can be
      attained essentially ``for free."
    \item {\bf Connection to scoring rules and test structure.}  An
      interesting by-product of our analysis is to establish a
      connection between menu design and strictly proper scoring rules
      (\Cref{SecMain:connection_to_psr}). However, in contrast to
      classical scoring rules, the principal cannot directly specify
      the agent’s utility but must induce it indirectly through
      contracts tied to hypothesis tests. When all contract parameters
      are flexible, full elicitation is possible. Under
      constraints---e.g., when rewards are fixed---the structure of
      the statistical testing problem itself dictates which types can
      be elicited and at what cost (see \Cref{co:Gfun_constR}).
\end{itemize}

Taken together, our results illustrate a key principle: heterogeneity
and strategic behavior, which are often viewed as obstacles, can
instead be leveraged as tools. By carefully designing testing
protocols and incentives, the principal can transform private
information and self-interested participation into a mechanism for
statistically optimal decision-making---and remarkably, this can be
achieved with minimal financial cost.

\subsection{Related work}

Our work connects to several strands of literature at the intersection
of contract theory, statistical decision-making, and strategic
behavior.  In addition, by comparing statistical performance to an
oracle, our work shares the spirit of statistical literature on
adaptive estimation and testing.

Let us begin by discussing the connections to screening and incentive
design in contract theory~\citep{laffont2001, bolton2004contract,
  salanie2005economics}. Here the focus of the study is how a principal
can elicit private information from heterogeneous agents by offering a
menu of contracts. Classical applications include nonlinear
pricing~\citep{mussa1978monopoly, maskin1984monopoly},
auctions~\citep{myerson1981optimal, maskin1984optimal}, and
regulation~\citep{baron1982regulating, laffont1993theory}.  Our work
shares this idea of elicitation, but shows that hypothesis tests
themselves, paired with suitable payoff structures, can act as
screening devices. Related recent work has used experimental designs
as screening mechanisms~\citep{wittbrodt2025delegating,
  yoder2022designing, wang2023contractingheterogeneousresearchers,
  jagadeesan2025publication}, underscoring the broader potential of
statistical objects for information elicitation.  Elicitation is
naturally connected to the notion of proper scoring
rules~\citep{brier1950verification, good1952rational,
  mccarthy1956measures, savage1971elicitation, gneiting2007strictly},
which correspond to utility functions designed to elicit truthful
reports.  Our mechanism also encourages truthful elicitation, but with
the important distinction that we have only indirect and partial
control over the utility via the specification of the hypothesis test.

More broadly, our work relates to a growing literature on strategic
behavior in decision-making, where agents may manipulate data,
analysis, or participation to influence outcomes. A prominent example
is $p$-hacking, where researchers adjust analyses to obtain
significant results, inflating false positives. The statistics
literature has responded with selective inference
methods~\citep{taylor2015statistical, berk2013valid}, while economic
analyses take a game-theoretic perspective. For instance,
\citet{mccloskey2024critical} designed critical values robust to
strategic manipulation, and \citet{jagadeesan2025publication} studied
how differences in private costs and incentives shape optimal
publication rules. \citet{spiess2018optimal} showed that restricting
researchers to fixed-bias estimators can improve inference when
planner's and researcher's objectives diverge.  Beyond $p$-hacking,
the machine learning community has explored mechanism design under
strategic manipulation of data, such as altering features or labels to
influence classifiers~\citep{hardt2016strategic, dong2018strategic} or
regressions \citep{dekel2010incentive, perote2004strategy,
  chen2018strategyproof}. In contrast, the strategic behavior in our
setting takes the form of participation decisions: agents cannot alter
test outcomes directly, but they can choose whether to engage. This
captures environments where experiments are preregistered or data
collection is externally verified, so participation is the primary
lever for influence.

As noted above, our work contributes directly to the evolving
literature on principal–agent hypothesis
testing. \citet{tetenov2016economic} provided an early analysis
linking type I error control to a cost–profit ratio. Subsequent
work~\citep{bates2022principalagent, bates2023incentivetheoretic,
  shi2024sharp, hossain2025} expanded the framework to incorporate
risk aversion, stochastic rewards, and simultaneous control of type I
and II errors. Our contribution departs from this line by focusing on
heterogeneous populations and showing how menus of contracts can
achieve type-optimal thresholds and optimal error trade-offs. In doing
so, we integrate heterogeneity, screening, and strategic participation
into a unified framework that ties statistical objectives to
incentive-compatible design.

Finally, we define optimality in this paper with reference to an
oracle that is given direct access to the agents' private information.
Our work thus shares the spirit of work on adaptive estimation and
testing (e.g.,~\citep{CaiLow2006, Lep90,Spo96}).  This classical work
considers tests or estimators that adapt to unknown problem structure
such as smoothness or sparsity, whereas we study adaptation to the
unknown private information of a collection of agents.

\paragraph{Paper organization:}

\Cref{SecSetup} formalizes the menu-based principal–agent testing
framework, specifying agents’ strategic behavior and the principal’s
statistical objectives. \Cref{SecMain} is devoted to the statement of
main results on separating menus (\Cref{SecMain:psr_menu}), along with
results on menu constructions that satisfy pre-specified criteria
(\Cref{SecMain:opt_menu}). Our central theorem (\Cref{th:psr_menu}),
stated in \Cref{SecMain:characterize}, characterizes all separating
menus that elicit private information and implement type-optimal
thresholds for arbitrary heterogeneous populations.
In~\Cref{SecMain:oracle}, we show how this theorem allows the
principal to match the statistical performance of the oracle
(\Cref{CorOracle}).  We highlight a connection to proper scoring rules
in \Cref{SecMain:connection_to_psr}, and analyze the financial costs
of elicitation in~\Cref{SecMain:cost_of_elicitation}.  In
\Cref{SecMain:opt_menu}, we study menu constructions under two
practical design criteria: one that minimizes the principal’s
financial cost, showing that information elicitation can be achieved
``for free," and another that addresses settings where the principal
can only partially specify contracts. We illustrate the implications
of these results through synthetic examples.  We conclude in
\Cref{SecDiscussion} with a discussion.  We examine the robustness of
menu-based testing to model misspecification
in~\Cref{Sec:sensitivity}.


\section{Mathematical Model}
\label{SecSetup}

In this section, we formalize the problem by specifying the structure
of the hypothesis tests, the principal–agent interaction, the agent’s
utility maximization problem, and the principal’s statistical decision
problem.

\subsection{Strategic hypothesis test}
\label{SecSetup:test_structure}

We consider a game-theoretic framework involving two parties: a
principal, who serves as a statistical regulator, and an agent. The
principal’s role is to approve or deny proposals submitted by
agents. To make these decisions, the principal offers the agent a menu
of statistical contracts at different costs (described below). Agents
observe the entire menu and choose whether to select one contract from
the menu or to opt out, based on their private information and
utility. Upon selecting a contract, data is collected and reported to
the principal, who conducts a hypothesis test according to the terms
specified in the chosen contract. We make all of these steps precise
below.


 \subsubsection*{Hypothesis space and agent type}
\label{SecSetup:hypothesis_space}

We assume that there exists a hidden random parameter $\theta$, which
takes two possible values $\theta \in \Theta = \{\theta_0, \theta_1\}$
and represents the latent quality of any proposal. The null value
$\theta_0$ corresponds to an ineffective proposal and the non-null
value $\theta_1$ corresponds to an effective proposal. Neither the
principal nor the agent knows the true value of $\theta$. However, the
agent has private knowledge of a prior distribution $\priorDist$ over
$\Theta$, which is inaccessible to the principal. While the principal
could attempt to ask the agent directly about $\priorDist$, there is
no guarantee the agent would report it truthfully. As we later show,
the principal can instead offer a menu of contracts designed to elicit
information about $\priorDist$.

We consider a setting with a population of agents, each characterized
by a potentially distinct prior distribution $\priorDist$. Since the
parameter space is binary, this distribution is fully specified by the
\emph{prior null probability} $\priorNull \defn \priorDist(\theta =
\theta_0)$.  We can classify agents according to the value of their
prior null: in particular, we say that an \emph{agent is of type
$\priorNull$} if he has prior null probability equal to $\priorNull$.
The overall population of agents consists of different types occurring
with a certain frequency, and we let $\typeDist$ denote a distribution
over possible types $\priorNull$. We do not assume that the principal
knows $\typeDist$.

\subsubsection*{The principal-agent interaction}

We now formalize the notion of a statistical contract and its role in
the principal-agent interaction. The principal aims to approve
effective proposals and deny ineffective ones by conducting hypothesis
tests. To this end, she designs a contract menu, meaning a collection
of contracts indexed by reported type $\reportNull \in [0,1]$, where
each contract specifies how the hypothesis test will be conducted. Any
\emph{contract menu} takes the form
\begin{align}
\label{EqnContractMenu}
\ContractMenu & \defn \Big \{ (\fthreshold{\reportNull},
\freward{\reportNull}, \fcost{\reportNull}) \mid \reportNull \in \reportSupp \Big
\},
\end{align}
where $\reportSupp \subseteq [0,1]$ is a subset of values associated
with contracts, and any contract is defined by a triplet with the following components:
\begin{itemize}[leftmargin=1em] 
\item a threshold $\fthreshold{\reportNull} \in [0,1] $ used in the
  principal’s binary hypothesis test, as described below.
\item the reward $\freward{\reportNull}$ received by the agent if the
  proposal is approved, and
\item the cost $\fcost{\reportNull}$ paid by the agent to opt in
  (and hence to generate evidence).
\end{itemize}
\noindent The agent observes the full contract
menu~\eqref{EqnContractMenu} before making any selection. Based on his private information $\priorNull$, the agent decides whether to opt in or opt out. If the
agent opts in, he selects a type $\reportNull \in
\reportSupp$ to the principal. The contract corresponding to this
reported type $\reportNull$ is then executed.

We now describe how each possible contract is executed. Let $\{\Pdist_\theta
\mid \theta \in \Theta \}$ be a family of probability distributions
indexed by $\theta$. These are the possible distributions generating
the data $\evidence$ that the principal uses for decision-making. The family of distributions is known to both the principal and agent.
Without loss of generality, we assume the variable
$\evidence$ is a $p$-value, meaning that distribution under the null
hypothesis is uniform:
\begin{align*}
\Pdist_{\theta_0}(\evidence \leq \threshold) = \threshold \qquad
\mbox{for any choice of threshold $\threshold \in [0,1]$.}
\end{align*}
When
an agent with parameter $\theta$ chooses a statistical contract
$(\fthreshold{\reportNull}, \reward_\reportNull, \cost_\reportNull)$,
he spends $\cost_\reportNull$ dollars to conduct an experiment, which
yields a random variable $\evidence$ drawn from distribution
$\Pdist_\theta$. 
The principal then approves the proposal if
$\evidence \leq \fthreshold{\reportNull}$ and denies it otherwise.

If the
principal approves, the agent receives $\reward_\reportNull$ dollars
in return, resulting in a net payoff of $\reward_\reportNull - \cost_\reportNull$
dollars after accounting for the up-front cost. Otherwise, when the principal denies the proposal, the agent receives no reward, thus losing
a net of $\cost_\reportNull$ dollars. The principal-agent interaction is
summarized as follows:

\begin{tcolorbox}[boxrule=0.5pt,colback=white,colframe=black,sharp corners]
    {\bf Principal-agent hypothesis test with a menu}
\begin{enumerate}[itemsep=0pt]
\item The principal publishes a contract menu $\ContractMenu = \Big \{
  (\fthreshold{\reportNull}, \reward_\reportNull, \cost_\reportNull)
  \mid p \in \reportSupp \Big \}$ indexed by types $p \in \reportSupp
  \subseteq [0,1]$.
    \item Based on his private prior $\priorDist$ and utility, the
      agent reports a type or opts out.
    \item If the agent reports type $\reportNull$, he then receives
      contract $(\fthreshold{\reportNull}, \reward_\reportNull,
      \cost_\reportNull)$ and spends $\cost_\reportNull$ dollars to
      run a trial, which outputs a random variable $\evidence \sim
      \Pdist_\theta$.
    \item The principal approves the proposal if $\evidence \leq
      \fthreshold{\reportNull}$ and denies otherwise.
    \item If the principal approves, the agent receives a reward of
      $\reward_\reportNull$ dollars.
\end{enumerate}
\end{tcolorbox}


\subsection{Utility-maximizing agents}
\label{SecSetup:agent_utility}

We assume that agents are strategic decision-makers who act to maximize expected utility. Each agent is risk-neutral, meaning that their utility is equivalent to their expected gain in wealth. The agent’s utility, determined jointly by their own actions and the principal’s decision, can fall into one of three possible outcomes:
\begin{itemize}[leftmargin=1em]
\item If the agent chooses contract $(\threshold_\reportNull,
\reward_\reportNull, \cost_\reportNull)$ and the principal
approves the proposal, the agent gains $\reward_\reportNull -
\cost_\reportNull$ in wealth.
\item If the agent chooses contract $(\threshold_\reportNull,
\reward_\reportNull, \cost_\reportNull)$ and the principal
denies the proposal, the agent loses $\cost_\reportNull$ in
wealth.
\item If the agent chooses to opt out, he neither gains nor loses
in wealth.
\end{itemize}

For an agent who opts into the contract indexed by $\reportNull$, his
change in wealth is given by the random variable
$\WealthAfter{\reportNull}$. Since the prior distribution $\priorDist$
of any agent is specified by the null probability $\priorNull =
\priorDist(\theta = \theta_0)$, we define the agent’s expected utility
as
\begin{align*}
\utilfunc(\priorNull; \reportNull) \defn \E_{
  \substack{\theta \sim \Qprob \\ \evidence \sim \Pdist_\theta}} \Big[
  \WealthAfter{\reportNull} \Big].
\end{align*}
Given a contract menu $\ContractMenu$ with support $\reportSupp$, we
assume that agents are \emph{wealth-maximizing}, meaning that the
behavior of an agent of type $\priorNull$ is characterized by the
\emph{selection function}
\begin{subequations}
\begin{align}
\label{EqnDefnSelFun}  
\SelFun(\priorNull) & \defn \arg \max_{\reportNull \in \reportSupp}
\utilfunc(\priorNull; \reportNull).
\end{align}
If $\utilfunc(\priorNull; \SelFun(\priorNull)) \geq 0$, the agent
accepts the chosen contract; otherwise, the agent opts out.

To make the utility expression more explicit, recall the type I error
and power functions of a simple-simple hypothesis test with threshold
$\threshold \in [0,1]$:
\begin{align}
\label{eq:null_alt_app}  
\nullAppSimple{\threshold} \defn \Pdist_{\theta_0}(\evidence \leq
\threshold) = \threshold \quad \mbox{and} \quad
\altAppSimple{\threshold} \defn \Pdist_{\theta_1}(\evidence \leq
\threshold).
\end{align}
With this notation, the expected utility of an agent of type
$\priorNull$ selecting contract $\reportNull$ can be written as
\begin{align}
\label{EqnOptInUtility}
\utilfunc(\priorNull;\reportNull) & = \priorNull
\reward_\reportNull \big[\nullAppSimple{\threshold_\reportNull} -
\altAppSimple{\threshold_\reportNull}\big] +
\big[\reward_\reportNull\altAppSimple{\threshold_\reportNull} -
\cost_\reportNull \big].
\end{align}
\end{subequations}
See~\Cref{SecProofOptInUtility} for the proof of this claim.

Notice the utility~\eqref{EqnOptInUtility} is linear in the agent's
prior null probability $\priorNull$ with slope $\reward_\reportNull
[\nullAppSimple{\threshold_\reportNull} -
  \altAppSimple{\threshold_\reportNull}]$. We are interested in the
case with non-trivial power, i.e.,
$\altAppSimple{\threshold_\reportNull} >
\nullAppSimple{\threshold_\reportNull}$, in which case the slope is
negative. Thus, agents with smaller $\priorNull$ (more optimistic
priors) derive greater expected utility from any given
contract. Moreover, the utility is nondecreasing in $\tau_p$, since
both the type I error $\nullAppSimple{\threshold_\reportNull}$ and the
power $\altAppSimple{\threshold_\reportNull}$ are nondecreasing.

\subsection{Statistical decision problem of the principal}
\label{SecSetup:principal_opt}

At a high level, the principal’s objective is to design a menu that
controls the two types of error in a binary hypothesis test.  The menu
should achieve a desired balance between type I errors (where $\theta
= \theta_0$ but the principal approves) and type II errors (where
$\theta = \theta_1$ but the principal denies). We consider two
specifications of this trade-off below: (a) she may wish to minimize a
weighted sum of the errors, or (b) to maximize total discovery rate subject to a constraint on the false discovery rate.  In
either case, the key feature is that the principal's optimal decision
rule depends on the agent’s prior null probability $\priorNull$. When
the null is more likely (large $\priorNull$), a lenient test with a
large threshold risks too many false positives, so the optimal test
should be more stringent. Conversely, when the null is unlikely (small
$\priorNull$), the principal can afford to be more permissive to avoid
unnecessary false negatives.

We use $\tau_q$ to denote the principal's ideal threshold for an agent
with prior $q$, and refer to it as the \emph{type-optimal
threshold}. We describe two concrete cases next.

\paragraph{Weighted sum of type I and type II error.}

In many settings, the principal's cost of making a type I error is not
the same as that of type II error.  Suppose type I errors incur cost
$\FPcost$ and type II errors incur cost $\FNcost$. For an agent of
type $\priorNull$, if the principal deployed a test at level
$\threshold$, the associated Bayes risk (BR) is given by
\begin{align}
\label{EqnBayesRiskOmega}  
\BayesRisk_\omega(\priorNull, \threshold) = \FPcost \cdot \priorNull
\cdot \underbrace{\Pdist_{\theta_0}(\evidence \le \threshold)}_{\equiv
  \threshold} + \FNcost \cdot (1-\priorNull) \cdot
\underbrace{\Pdist_{\theta_1}(\evidence > \threshold)}_{\equiv 1-
  \beta_1(\threshold)}.
\end{align}
Classical results show that the optimal decision rule for the Bayes
risk $\BayesRisk_\omega$ is to threshold the likelihood ratio
statistic at level $\priorNull \FPcost / (1 - \priorNull)
\FNcost$. Thus, the optimal $p$-values $\evidence$ for the principal
are those obtained from the likelihood ratio statistic.
Concretely, if we let $\evidence \mapsto \mathcal{L}(\evidence)$ be
the function mapping $\evidence$ to the likelihood ratio, then the
type-optimal threshold on the $p$-value scale is
\begin{align}
\label{EqnBayesThreshold}
\fthreshold{\priorNull} \equiv \fthreshold{\priorNull}(\omega) \defn
\mathcal{L}^{-1}\bigg( \frac{\priorNull \FPcost}{(1-\priorNull)
  \FNcost}\bigg).
\end{align}

\paragraph{Maximizing TDR subject to FDR control.} Alternatively,
the principal might wish to maximize the expected true discovery rate
(TDR) while controlling the false discovery rate (FDR) at a target
level $\fdrLevel$.  For an agent of type $\priorNull$, the FDR
associated with the hypothesis test with threshold $\threshold$ is
given by
\begin{subequations}
\label{EqnFDRModel}  
  \begin{align}
\label{EqnDefnFDR}    
    \fdr(\priorNull, \threshold) & \defn \Prob \Big[\theta \in
      \Theta_0 \mid \evidence \le \threshold \Big] = \frac{\priorNull
      \nullAppSimple{\threshold}}{\priorNull
      \nullAppSimple{\threshold} + (1-\priorNull)
      \altAppSimple{\threshold}},
\end{align}
which corresponds to the probability that a proposal approved is
actually ineffective.  On the other hand, the TDR for a given
$(\priorNull, \threshold)$-pair is given by
\begin{align}
\label{EqnDefnETD}
\etd(\priorNull, \threshold) \defn (1 - \priorNull)
\altAppSimple{\threshold},
\end{align}
\end{subequations}
which represents the probability of correctly approving non-null
proposals from agent $\priorNull$ when using threshold
$\threshold$. Since the FDR is increasing in $\threshold$, the
constraint $\fdr(\priorNull, \threshold) \leq \fdrLevel$ places an
upper bound on the threshold $\threshold$ that agent type $\priorNull$
can be allowed to choose. The TDR is also increasing in $\threshold$,
and thus maximized by choosing the largest such $\threshold$.  We
conclude that the type-optimal threshold for an agent of type
$\priorNull$ is given by
\begin{align}
\label{EqnFDRThreshold}
\fthreshold{\priorNull} \equiv \threshold_\priorNull(\fdrLevel) \defn
\sup \bigg\{\threshold \, \mid \, \fdr(\priorNull, \threshold) \leq
\fdrLevel \bigg\}.
\end{align}

\paragraph{Defining the oracle performance:}
In either case, we are left with the following problem: if the
agent types were known, then given an agent of type $\priorNull$, the principal would implement the test with type-optimal threshold
$\fthreshold{\priorNull}$.  Doing so over the full population of
agents (as $\priorNull$ varies) would achieve a statistical guarantee
that we refer to as the \emph{heterogeneous agent oracle performance}.  More specifically,
given any distribution $\TypeDist$ over the set of possible agent
types $\priorNull \in [0,1]$, the oracle value of the
$\omega$-weighted Bayes risk is given by
\begin{subequations}
  \begin{align}
\label{EqnBayesRiskOracle}    
\int_0^1 \BayesRisk_\omega \big(\priorNull,
\threshold_\priorNull(\omega) \big) d \TypeDist(\priorNull),
\end{align}
based on the type-optimal threshold $\threshold_\priorNull(\omega)$
from equation~\eqref{EqnBayesThreshold}.  In words, this is the
minimum of the $\omega$-Bayes-risk over all possible tests achievable
when agent types are known.

Similarly, for FDR controlled at level $\fdrLevel$, the oracle value
of the TDR is given by
\begin{align}
\label{EqnFDROracle}      
\int_0^1 \etd \big(\priorNull, \threshold_\priorNull(\fdrLevel) \big)
d \TypeDist(\priorNull),
\end{align}
\end{subequations}
based on the type-optimal threshold $\threshold_\priorNull(\fdrLevel)$
from equation~\eqref{EqnFDRThreshold}.  In words, this is the maximum
of the $\tdr$ subject to the $\fdr$ being $\fdrLevel$-bounded over all
possible tests when agent types are known.

Of course, the key challenge is that types are unobserved. Agents,
motivated by their own payoffs, may misreport their type to secure
more favorable terms. The principal’s problem is therefore to design a
contract menu that simultaneously uses the type-optimal thresholds
while also making truthful reporting optimal for the agent. We refer
to this object as a separating menu, since it acts to separate agents
according to their type. If a separating menu can be constructed, it
allows the principal to match the oracle statistical performance, as
defined in equations~\eqref{EqnBayesRiskOracle}
and~\eqref{EqnFDROracle}, \emph{without knowing the types or their proportion $\TypeDist$ a priori}.
Designing such separating menus is the focus of the next section.


\section{Main Results}
\label{SecMain}

In this section, we develop the main results: how the principal can
construct menus $\ContractMenu$ that implement type-optimal thresholds
$\fthreshold{\priorNull}$. In~\Cref{SecMain:psr_menu}, we begin with a
formal definition of a separating menu, identifying the
incentive compatibility and participation constraints for truthful
reporting. We then show how to construct such menus for arbitrary
agent distributions, establish their link to proper scoring rules, and
examine implementation costs. Finally, \Cref{SecMain:opt_menu}
considers optimal designs under two practical criteria: minimizing
financial cost and handling constrained contract parameters.

Throughout, we assume the principal knows the statistical properties
of the test, namely the type I error rate $\nullAppSimple{\threshold}
= \threshold$ and the power function $\threshold \mapsto
\altAppSimple{\threshold}$ (cf. equation~\eqref{eq:null_alt_app}). We
restrict attention to cases where the power is non-trivial:
\begin{equation}
\label{eq:nontrivial_power}
\altAppSimple{\threshold} > \nullAppSimple{\threshold} \qquad
\mbox{for all $\threshold \in (0,1)$.}
\end{equation}
We assume that the principal's type-optimal threshold assignment $q
\mapsto \tau_q$ is non-increasing but make no further restrictions. As
such, the menu construction that follows applies when the principal
wishes to control the weighted combination of type I and type II errors
or to maximize power given an FDR constraint.


\subsection{Separating menus and oracle risk}
\label{SecMain:psr_menu}

As discussed in~\Cref{SecSetup:principal_opt}, if the principal could
observe each agent's true type, she would simply assign the
type-optimal threshold $\fthreshold{\priorNull}$. However, since types
are private, type-optimal thresholds cannot be assigned directly.
Instead, the principal must design a menu of contracts---each
specifying a $p$-value threshold, a reward, and a cost---that
incentivizes agents to truthfully reveal their types.

We now formalize the requirements for such menus. Given a subset
$\mathcal P \subset [0,1]$ of agent types, a contract menu
$\ContractMenu$ is said to be \emph{separating} if each agent opts
into the contract designed for their true type.  In terms of the
selection function~\eqref{EqnDefnSelFun}, a menu $\ContractMenu$ is
separating if and only if
\begin{align}
\label{def:feasible_menu}
 \SelFun(\priorNull) \stackrel{(a)}{=} \priorNull \quad \mbox{and}
 \quad \utilfunc(\priorNull; \SelFun(\priorNull)) \stackrel{(b)}{\geq}
 0 \qquad \mbox{for all $\priorNull \in \mathcal P$.}
\end{align}

We refer to condition \eqref{def:feasible_menu}(a) as an
\emph{incentive compatibility} constraint, which implies that it is
optimal for any agent type to report their type $\priorNull$
truthfully, since misreporting will not yield higher utility.
Condition \eqref{def:feasible_menu}(b) is a \emph{participation}
constraint: since agents can always opt out, their selected contract
should guarantee nonnegative expected utility.

Thus, if the principal can construct a separating menu $\ContractMenu$
with type-optimal threshold $\fthreshold{\priorNull}$, then for each
type of agent that opts in, she can conduct statistically optimal
tests.  We now turn to the construction of separating menus. In what
follows, we show that every separating menu is characterized by a
real-valued convex function $\Gfun$ and establish a connection between
separating menus and proper scoring rules.

\subsubsection{Main result: Characterization of separating menus}
\label{SecMain:characterize}

We first set up the notation needed to provide our general
characterization. Consider a function $\Gfun: \supp(\typeDist) \to \R$
defined on the support of the type distribution $\typeDist$ that
satisfies the following two properties.  First, for each \(
\reportNull \in \supp(\typeDist) \), there exists a scalar
$\gstar{\reportNull} < 0$ such that
      \begin{subequations}
        \label{EqnGfun}
    \begin{align}
    \label{eq:Gfun_ic}
        \Gfun(\priorNull) > \Gfun(\reportNull) +
        \gstar{\reportNull}(\priorNull - \reportNull) \quad \text{for
          all } \priorNull \in \supp(\typeDist) \setminus
        \{\reportNull\}.
    \end{align}
    Second, we have
      \begin{align}
        \label{eq:Gfun_participation}        
        \Gfun(\worstPriorNull) \geq 0, \quad \text{where }
        \worstPriorNull \defn \sup \{q \mid q \in \supp(\typeDist) \}.
      \end{align}
      \end{subequations}
When $\supp(\typeDist)$ is an interval in $[0,1]$ and $\Gfun$ is
differentiable, these conditions are equivalent to $\Gfun$ being
strictly convex, nonnegative, and decreasing. In the
non-differentiable case, we can understand $\gstar{\reportNull}$ as
playing the role of a subgradient of $\Gfun$ at $\reportNull$, and
when $\supp(\typeDist)$ is a discrete set, a function $\Gfun$
satisfying~\eqref{eq:Gfun_ic} can be interpreted as satisfying a form
of discrete convexity.  \\

\noindent Our main result is that the class of all such functions
$\Gfun$ characterizes the set of separating menus:
\begin{theorem}
\label{th:psr_menu}
Consider a function $\Gfun: \supp(\typeDist) \to \R$ that satisfies
conditions~\eqref{eq:Gfun_ic} and~\eqref{eq:Gfun_participation}, and a
collection of type-optimal thresholds $\{ \threshold_\priorNull \;
\mid \; \priorNull \in \supp(\typeDist) \}$.
\begin{enumerate}[label=(\alph*), ref=\thetheorem(\alph*)]
    \item By defining the reward-cost pairs
\begin{align}
\label{eq:psr_reward_cost}
\reward_\reportNull & =
\frac{\gstar{\reportNull}}{\nullAppSimple{\threshold_\reportNull} -
  \altAppSimple{\threshold_\reportNull}} \quad \mbox{and} \quad
\cost_\reportNull = \gstar{\reportNull} \Big[
  \frac{\altAppSimple{\threshold_\reportNull}}{\nullAppSimple{\threshold_\reportNull}
    - \altAppSimple{\threshold_\reportNull}} + \reportNull \Big] -
\Gfun(\reportNull),
\end{align}
we obtain a contract menu $\ContractMenu \defn \{
(\threshold_\reportNull, \reward_\reportNull, \cost_\reportNull) \mid
\reportNull \in \supp(\typeDist) \}$ that is separating.
\item Conversely, any separating menu indexed by $\reportNull$ ranging
  over $\supp(\typeDist)$, can be constructed in this way for some
  function $\Gfun: \supp(\typeDist) \to \R$ satisfying
  conditions~\eqref{eq:Gfun_ic} and~\eqref{eq:Gfun_participation}.
\end{enumerate}
\end{theorem}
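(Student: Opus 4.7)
The plan is to introduce the indirect utility function $\Gfun(\reportNull) \defn \utilfunc(\reportNull; \reportNull)$---the expected utility an agent of type $\reportNull$ earns under truthful reporting---and to recognize the theorem as asserting a bijection between separating menus and the class of functions $\Gfun$ described by \eqref{eq:Gfun_ic}--\eqref{eq:Gfun_participation}. The central observation is that equation \eqref{EqnOptInUtility} exhibits $\utilfunc(\priorNull; \reportNull)$ as an affine function of $\priorNull$ whose slope equals $a_\reportNull \defn \reward_\reportNull[\nullAppSimple{\threshold_\reportNull} - \altAppSimple{\threshold_\reportNull}]$. Rewriting this line in point-slope form about $\reportNull$ gives the normal form $\utilfunc(\priorNull; \reportNull) = \Gfun(\reportNull) + a_\reportNull (\priorNull - \reportNull)$, which reduces both incentive compatibility and participation to inequalities on $\Gfun$ alone.

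For part (a), I would substitute the prescribed reward-cost pair \eqref{eq:psr_reward_cost} into \eqref{EqnOptInUtility} and verify by direct algebra that the slope becomes $\gstar{\reportNull}$ and the intercept is pinned so that the resulting line passes through the point $(\reportNull, \Gfun(\reportNull))$. With the normal form $\utilfunc(\priorNull; \reportNull) = \Gfun(\reportNull) + \gstar{\reportNull}(\priorNull - \reportNull)$ in hand, the IC requirement $\SelFun(\priorNull) = \priorNull$ is a literal rereading of \eqref{eq:Gfun_ic}. For participation, I would use \eqref{eq:Gfun_ic} together with $\gstar{\reportNull} < 0$ to show that $\Gfun$ is strictly decreasing on $\supp(\typeDist)$: for $\priorNull < \reportNull$, condition \eqref{eq:Gfun_ic} gives $\Gfun(\priorNull) > \Gfun(\reportNull) + \gstar{\reportNull}(\priorNull - \reportNull) > \Gfun(\reportNull)$, so combining with $\Gfun(\worstPriorNull) \geq 0$ from \eqref{eq:Gfun_participation} yields $\utilfunc(\priorNull; \priorNull) = \Gfun(\priorNull) \geq 0$ for every $\priorNull \in \supp(\typeDist)$.

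For the converse in part (b), I would take any separating menu, define $\Gfun(\reportNull) \defn \utilfunc(\reportNull; \reportNull)$, and set $\gstar{\reportNull} \defn a_\reportNull$. Strict IC immediately gives \eqref{eq:Gfun_ic}, and participation at the extremal type $\worstPriorNull$ gives \eqref{eq:Gfun_participation}. Recovering \eqref{eq:psr_reward_cost} from $(\Gfun(\reportNull), \gstar{\reportNull})$ is a routine two-equation inversion, well-defined because \eqref{eq:nontrivial_power} keeps the denominator $\nullAppSimple{\threshold_\reportNull} - \altAppSimple{\threshold_\reportNull}$ nonzero. The main technical obstacle I anticipate is verifying the sign constraint $\gstar{\reportNull} < 0$: since \eqref{eq:nontrivial_power} makes the bracketed factor negative, this reduces to showing that any separating menu must assign positive rewards $\reward_\reportNull > 0$. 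I would establish this by a pairwise-IC comparison---adding the IC inequalities at types $\priorNull$ and $\reportNull$ yields $(a_{\reportNull} - a_{\priorNull})(\priorNull - \reportNull) \leq 0$, so slopes are monotone non-increasing in $\reportNull$---and then combining this monotonicity with the participation constraint together with the non-increasing type-optimal threshold map $\reportNull \mapsto \threshold_\reportNull$ to rule out wrong-sign rewards.
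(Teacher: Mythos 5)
Your argument tracks the paper's proof very closely: you identify $\Gfun(\reportNull)$ with the truthful-reporting utility $\utilfunc(\reportNull;\reportNull)$, exploit the fact that $\utilfunc(\priorNull;\reportNull)$ in~\eqref{EqnOptInUtility} is affine in $\priorNull$ with slope $\reward_\reportNull\big[\nullAppSimple{\threshold_\reportNull}-\altAppSimple{\threshold_\reportNull}\big]$, verify that the pair~\eqref{eq:psr_reward_cost} turns this line into the supporting line of $\Gfun$ at $\reportNull$, and then read incentive compatibility off~\eqref{eq:Gfun_ic} and participation off~\eqref{eq:Gfun_participation} together with monotonicity. Part (a) and the construction of $\Gfun$ and $\gstar{\reportNull}$ in part (b) are exactly the paper's argument and are fine.

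The one place you depart from the paper is the sign constraint $\gstar{\reportNull}<0$ in the converse, and the route you sketch there does not go through. A minor point first: summing the two IC inequalities gives $(a_\reportNull-a_\priorNull)(\priorNull-\reportNull)\le 0$, which makes the slopes \emph{non-decreasing} in the type (this is precisely the discrete convexity of $\Gfun$), not non-increasing as you wrote. More substantively, incentive compatibility, participation, and a non-increasing threshold map do \emph{not} imply $\reward_\reportNull>0$. Nothing in the definition~\eqref{def:feasible_menu} of a separating menu constrains the signs of rewards or costs, so the contract intended for the worst type $\worstPriorNull$ may have $\reward_{\worstPriorNull}\le 0$ financed by a negative cost (the agent is paid to enroll and pays upon approval): its utility line then has non-negative slope in $\priorNull$, it can still be strictly preferred by $\worstPriorNull$ and strictly dominated at every lower type by that type's own contract, and participation holds. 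Slope monotonicity only says the top type's slope is the largest, which is perfectly consistent with that slope being $\ge 0$; even adding the natural restriction $\cost_\reportNull\ge 0$ still leaves the degenerate case $\reward_{\worstPriorNull}=0$, where $\gstar{\worstPriorNull}=0$. So no pairwise-IC argument of the kind you propose can deliver $\gstar{\reportNull}<0$ for every type. The paper does not attempt this either: its converse simply sets $\gstar{\reportNull}=\reward_\reportNull\big[\nullAppSimple{\threshold_\reportNull}-\altAppSimple{\threshold_\reportNull}\big]$ and invokes the non-trivial power condition~\eqref{eq:nontrivial_power}, which tacitly uses the modeling convention that rewards are strictly positive payments to the agent. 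The clean fix is to state $\reward_\reportNull>0$ as part of the contract model (as the forward construction automatically guarantees) rather than to try to derive it; with that convention your converse is complete.
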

\noindent See~\Cref{proof:psr_menu} for the proof of this result.  We
also provide an illustration of this construction when the agent types
are finite in~\Cref{SecApp:finite_type_menu}. \\

\Cref{th:psr_menu} provides a systematic way to construct separating
menus for any distribution over agent types. Moreover, it shows that
there is a one-to-one correspondence between convex functions $\Gfun$
and separating menus.

Given this one-to-one correspondence, one might suspect that $\Gfun$
has a fundamental meaning.  As we show in~\Cref{proof:psr_menu}, the
function value $\Gfun(\priorNull)$ represents the expected utility
attained by an agent of type $\priorNull$ when reporting
truthfully---that is, we have the equivalence $\Gfun(\priorNull) =
\utilfunc(\priorNull;\priorNull)$.  The reward
function~\eqref{eq:psr_reward_cost} is constructed so that when an
agent of type $\priorNull$ reports type $\reportNull$, his
utility~\eqref{EqnOptInUtility} induced by the contract
$(\threshold_\reportNull, \reward_\reportNull, \cost_\reportNull)$ has
a slope given by \mbox{$\gstar{\reportNull} =
  \reward_\reportNull[\nullAppSimple{\threshold_\reportNull} -
    \altAppSimple{\threshold_\reportNull}]$.} In parallel, the cost
function is chosen so that his utility~\eqref{EqnOptInUtility} under
this contract coincides with the supporting hyperplane of $\Gfun$ at
$\reportNull$, given by
\begin{align*}
  \utilfunc(\priorNull; \reportNull) = \Gfun(\reportNull) +
  \gstar{\reportNull}(\priorNull - \reportNull) \quad \mbox{for
    $\priorNull \in [0,1]$.}
\end{align*}
As a result, property~\eqref{eq:Gfun_ic} guarantees incentive
compatibility~\eqref{def:feasible_menu}(a): each agent type achieves
maximal expected utility by reporting truthfully.

The condition $\gstar{\priorNull} < 0$ is
imposed to maintain consistency with the non-trivial power
assumption~\eqref{eq:nontrivial_power}; it implies that utility
declines with type $\priorNull$, so more optimistic agents (those with
lower $\priorNull$) obtain higher expected utility under the same
contract. Lastly, property~\eqref{eq:Gfun_participation} ensures that
participation constraints~\eqref{def:feasible_menu}(b) are met: once the highest type secures
nonnegative utility, all agents with lower types will strictly prefer to
participate.

\subsubsection{Matching the oracle statistical performance}
\label{SecMain:oracle}
The statistical motivation of our work was to determine when it is
possible for the principal to match the statistical performance of the
oracle given access to agent types a priori.  In particular, recall
the definitions of the oracle risk for the $\omega$-weighted Bayes
risk~\eqref{EqnBayesRiskOracle}, and the TDR
risk~\eqref{EqnFDROracle}.  A straightforward but important
consequence of~\Cref{th:psr_menu} is to provide a prescriptive means
for the principal to match the oracle.  In stating this result, we say
that a function $\Gfun$ is valid if it satisfies the conditions
of~\Cref{th:psr_menu}.  We summarize as follows:
\begin{corollary}[Matching the oracle performance]
\label{CorOracle}  
Consider any valid $\Gfun$ and any agent population.
  \begin{enumerate}[label=(\alph*), ref=\thetheorem(\alph*)]
  \item[(a)] Minimizing the $\omega$-weighted Bayes
    risk~\eqref{EqnBayesRiskOmega}: The menu constructed
    from~\Cref{th:psr_menu}(a) using the type-optimal
    thresholds~\eqref{EqnBayesThreshold} yields a testing protocol
    that matches the oracle performance~\eqref{EqnBayesRiskOracle}.
  \item[(b)] Maximizing $\tdr$ subject to $\fdr$
    constraint~\eqref{EqnFDRModel}: The menu constructed
    from~\Cref{th:psr_menu}(a) using the type-optimal
    thresholds~\eqref{EqnFDRThreshold} yields a testing protocol that
    matches the oracle performance~\eqref{EqnFDROracle}.    
  \end{enumerate}
\end{corollary}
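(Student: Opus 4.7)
The plan is to observe that \Cref{CorOracle} is essentially a direct consequence of \Cref{th:psr_menu}(a), once we unpack what ``separating'' buys us statistically. First I would invoke \Cref{th:psr_menu}(a) with the given valid function $\Gfun$ and the collection of type-optimal thresholds $\{\fthreshold{\priorNull} \mid \priorNull \in \supp(\typeDist)\}$---for part (a) using the thresholds from \eqref{EqnBayesThreshold}, and for part (b) using those from \eqref{EqnFDRThreshold}. The theorem then produces a menu $\ContractMenu = \{(\fthreshold{\reportNull}, \reward_\reportNull, \cost_\reportNull) \mid \reportNull \in \supp(\typeDist)\}$ that satisfies both defining conditions of a separating menu in \eqref{def:feasible_menu}.

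Next, I would translate separability into a pointwise statement about the principal's testing behavior. Incentive compatibility, $\SelFun(\priorNull) = \priorNull$, implies that every agent of type $\priorNull \in \supp(\typeDist)$ reports truthfully and hence receives the contract $(\fthreshold{\priorNull}, \reward_\priorNull, \cost_\priorNull)$, so the principal executes the hypothesis test at exactly the type-optimal threshold $\fthreshold{\priorNull}$ on evidence drawn from $\Pdist_\theta$ with $\theta \sim \priorDist$. The participation condition $\utilfunc(\priorNull; \priorNull) \geq 0$ guarantees that no type opts out. Integrating the per-type statistical performance against $\typeDist$ then yields, for part (a), the quantity $\int_0^1 \BayesRisk_\omega(\priorNull, \fthreshold{\priorNull}(\omega)) \, d\typeDist(\priorNull)$, which is precisely the oracle value in \eqref{EqnBayesRiskOracle}; and for part (b), the pointwise inequality $\fdr(\priorNull, \fthreshold{\priorNull}) \leq \fdrLevel$ holds for every participating type, while the aggregate true discovery rate $\int_0^1 \etd(\priorNull, \fthreshold{\priorNull}(\fdrLevel)) \, d\typeDist(\priorNull)$ matches the oracle value in \eqref{EqnFDROracle}.

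The one point warranting care---and the closest thing to an obstacle---is verifying that \Cref{th:psr_menu} is actually applicable in each case: its hypotheses require the map $\priorNull \mapsto \fthreshold{\priorNull}$ to be non-increasing, a standing assumption stated just above \Cref{SecMain:psr_menu}. For part (a), differentiating \eqref{EqnBayesThreshold} and using monotonicity of $\mathcal{L}^{-1}$ together with $\priorNull \mapsto \priorNull/(1-\priorNull)$ increasing shows that $\fthreshold{\priorNull}(\omega)$ is non-increasing in $\priorNull$; for part (b), the explicit form of $\fdr$ in \eqref{EqnDefnFDR} is increasing in $\priorNull$ for each fixed $\threshold$, so the largest $\threshold$ satisfying $\fdr(\priorNull, \threshold) \leq \fdrLevel$ is non-increasing in $\priorNull$, and the non-trivial power assumption \eqref{eq:nontrivial_power} ensures the map is well-defined. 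Once these checks are in place, no further work is required beyond reading off the aggregate statistical performance from the separating property.
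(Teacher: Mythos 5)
Your proposal is correct and follows essentially the same route as the paper's proof: invoke \Cref{th:psr_menu}(a) to obtain a separating menu, use incentive compatibility (and participation) to conclude each agent receives its type-optimal threshold, and integrate over the type distribution to match the oracle performance in each case. The extra monotonicity checks you perform are harmless additional verification of the paper's standing assumption and do not change the argument.
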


\begin{proof}
  The proof of each claim follows a parallel argument: here we
  prove~\Cref{CorOracle}(a).  First, the construction
  of~\Cref{th:psr_menu}(a) ensures that we have a separating menu.  By
  property~\eqref{def:feasible_menu}(a) from the definition of a
  separating menu, it follows that each agent truthfully reports its
  own type $\priorNull$.  Moreover, since the menu construction was
  based on the type-optimal thresholds~\eqref{EqnBayesThreshold}, the
  principal's protocol ensures that an agent of type $\priorNull$ is
  assigned the test with the matched type-optimal threshold.
  Therefore, the overall testing procedure has a Bayes risk, when
  averaged over any agent population
  $\TypeDist$, that matches the oracle
  performance~\eqref{EqnBayesRiskOracle}.
\end{proof}

Thus, we have shown that it is always possible for the principal to
match the oracle performance.  Note that there are no assumptions on
the structure of the underlying testing problem, apart from having a
$p$-value uniform under the null, as well as non-trivial
power~\eqref{eq:nontrivial_power}.  The caveat here is that~\Cref{th:psr_menu},
and hence~\Cref{CorOracle}, assumes that the principal has full control over
all three components (cost, reward, and threshold) of each statistical contract.
In~\Cref{SecMain:opt_menu_constR}, we visit a constrained setting in which
the principal has only partial control, and see that the answer then depends
on more fine-grained features of the test.

\subsubsection{Connection to proper scoring rules} 
\label{SecMain:connection_to_psr}

A separating menu ensures that any utility-maximizing agent type opts
into the contract designed for their true type. In this way, the
principal can elicit an agent’s private belief by observing their
choice. This resembles the logic of proper scoring rules: reward
functions that yield truthful elicitation under expected utility
maximization. In what follows, we demonstrate that an
incentive-compatible menu can be interpreted as inducing a strictly
proper scoring rule.

In order to make this connection precise, consider the binary random
variable $Y \defn \indicator(\theta = \theta_0)$, where $Y=1$
indicates that the agent's proposal is ineffective ($\theta =
\theta_0$) and $Y=0$ indicates effectiveness. The agent’s belief is
summarized by the prior null probability $\priorNull =
\priorDist(\theta = \theta_0)$, while the reported type is denoted by
$\reportNull$. For each report $\reportNull$, the principal offers a
contract consisting of a reward $\reward_\reportNull$, a $p$-value
threshold $\threshold_\reportNull$, and a cost
$\cost_\reportNull$. The agent’s expected payoff depends on the chosen
contract and the state $Y$ as
\begin{align}
\label{EqnFullScore}
\Score(\reportNull,Y) \defn 
\begin{cases}
\reward_\reportNull \nullAppSimple{\threshold_\reportNull} -
\cost_\reportNull \quad \mbox{when} \quad Y=1 \\ \reward_\reportNull
\altAppSimple{\threshold_\reportNull} - \cost_\reportNull \quad
\mbox{when} \quad Y=0
\end{cases}.
\end{align}
This function can be interpreted as a scoring rule: the agent reports
a probability $\reportNull$ and receives a payoff that depends on the
realized outcome $Y$. If the agent’s true belief is $\priorNull$, his
expected payoff from reporting $\reportNull$ is
$\Score(\reportNull, \priorNull) = \priorNull \big[\reward_\reportNull
  \nullAppSimple{\threshold_\reportNull} - \cost_\reportNull \big] +
(1-\priorNull) \big[\reward_\reportNull
  \altAppSimple{\threshold_\reportNull} - \cost_\reportNull \big]$,
which coincides
with our previously defined utility
function~\eqref{EqnOptInUtility}.

A scoring rule is said to be \emph{proper} if truthful reporting maximizes expected payoff, meaning that
\begin{align*}
\Score(\reportNull, \priorNull) \leq \Score(\priorNull, \priorNull)
\quad \mbox{for all pairs $\reportNull, \priorNull \in [0,1]$,}
\end{align*}
and \emph{strictly proper} if equality holds if and only if
$\reportNull = \priorNull$. By construction, an incentive-compatible
menu guarantees that each type $\priorNull$ strictly prefers the
contract intended for them. Equivalently, it induces a scoring rule
that is strictly proper. Thus, the design of an incentive-compatible
menu in the hypothesis testing context can be viewed as the design of
a strictly proper scoring rule tailored to the principal's statistical
performance criterion. This connection also clarifies why separating
menus can be constructed using convex real functions: every proper
scoring rule admits a convex potential
representation~\cite{mccarthy1956measures, savage1971elicitation,
  gneiting2007strictly}.

Our setting, however, differs from the abstract scoring-rule framework
in the following way: the principal has only \emph{partial control} of
the agent's utility via the menu design.  As evident from
definition~\eqref{EqnFullScore}, the induced scoring rule depends not
only on contract parameters $(\threshold_\reportNull,
\reward_\reportNull, \cost_\reportNull)$ but also on the type I error
and power function determined by the hypothesis test itself. The
principal can adjust contract parameters so that the resulting utility
behaves like a proper scoring rule, but she cannot fully dictate
payoffs independently of the test. This distinction foreshadows an
important limitation: if additional constraints are imposed on the
contract parameters, it may no longer be possible to induce a strictly
proper scoring rule. We illustrate this in
\Cref{SecMain:opt_menu_constR}, where requiring the reward to be
constant restricts the class of hypothesis tests and the range of
agent types for which the principal can implement separating menus.

\subsubsection{The costs of elicitation}
\label{SecMain:cost_of_elicitation}

Constructing separating menus enables the principal to assign type-optimal thresholds and elicit agents’ private beliefs. However, this design is not costless: in order to satisfy incentive compatibility, the principal must leave some surplus to agents. To understand what is lost by eliciting information through menus, we introduce two complementary measures of the trade-off, each defined with respect to a different benchmark.

\paragraph{Information rent.} The first benchmark is the full-information setting, in which the principal directly observes each agent’s type. With full information, she could implement the first-best contract by assigning the type-optimal threshold and adjusting the reward and cost so that each agent earns zero utility. Agents would still participate, but they would not retain any surplus. 

However, under asymmetric information, the principal cannot
condition contracts directly on types and must instead design a menu that induces truthful reporting. To
do so, she must offer higher utility to agents with lower prior null to prevent misreporting. This additional utility is the information rent---the surplus that agents receive in order to implement incentive compatibility (see~\cite{laffont2001,bolton2004contract}). 
In our setting, the utility that an agent of type $\priorNull$ gains
from participating is exactly $\Gfun(\priorNull)$, so
the \emph{total information rent} for a separating menu constructed by $\Gfun$ is 
\begin{align}
\label{EqnDefnInfoRent}  
 \infoRent & \defn \int_0^1 \Gfun(\priorNull) d \typeDist(\priorNull).
\end{align}
When the type distribution $\typeDist$ is uniform on $[0,1]$, the
information rent corresponds to the area under the curve $\Gfun$.

\paragraph{Screening cost.} Information rent highlights the gap to the unattainable first-best.
Since it is the unavoidable price of asymmetric information, a more
practical benchmark is the pooling contract, where the principal
offers the same contract to all agents. The natural candidate is a
baseline contract designed for the worst agent type $\worstPriorNull$
(the highest prior null), since this ensures participation of all
agent types.  To ensure incentive compatibility, however, the menu
must give each agent with better type (that is, with lower
$\priorNull$) strictly higher utility than he would obtain from
pretending to be type $\worstPriorNull$. In this situation, the
principal must sacrifice some surplus compared to simply offering a
single contract. Following the contract theory literature, we refer to
this surplus loss as the screening cost, since it is the cost of
designing a menu that separates (or ``screens") types rather than
pooling them. Formally, we define the \emph{screening cost} as
\begin{align}
\label{EqnDefnSeparation} 
\costSep & \defn \int_0^1 \big \{ \Gfun(\priorNull) -
\utilfunc(\priorNull; \worstPriorNull) \big \} d
\typeDist(\priorNull),
\end{align}
where $\utilfunc(\priorNull; \worstPriorNull)$ is the utility an agent
of type $\priorNull$ would receive from the contract intended for
$\worstPriorNull$. \\

In classical contract theory, there is typically a trade-off: a menu enables more precise targeting of types, but at the expense of screening costs; a single contract avoids such costs but results in inefficiencies from misallocation. One might expect the same tension here. Surprisingly, as we show later in \Cref{co:Gfun_varyingR}, there exist separating menus that achieve incentive compatibility with arbitrarily small financial cost to the principal. In some situations, the principal even realizes financial gains by offering separating menus rather than a single contract.

The key difference in our setting lies in how utility is transferred across types. In standard contracting problems, transfers are monetary rewards or cost adjustments, which reduce the principal’s payoff. In our setting, however, much of the utility increase for better-type agents comes from being assigned looser statistical thresholds, which raise the probability that their proposals are accepted. 

This is not zero-sum; assigning more permissive thresholds to better types increases the likelihood of true discoveries while maintaining control of type I errors, which simultaneously benefits both the principal and the agent. Menu-based testing can therefore be Pareto-improving: agents receive strictly higher utility when presented with more options, while the principal is able to achieve better statistical performance.

\subsection{Optimal menu design}
\label{SecMain:opt_menu}

\Cref{th:psr_menu} characterizes the full class of separating menus
that can elicit agents’ private beliefs and implement type-optimal
thresholds. Yet this characterization leaves open a practical
question: which separating menu should the principal actually
implement? Because infinitely many functions $\Gfun$ induce menus that
satisfy participation and incentive compatibility, additional design
criteria are needed to guide selection.

We focus on two such criteria, each linked to the discussion in
earlier sections. First, as noted in
\Cref{SecMain:cost_of_elicitation}, compared to offering a single
contract, a separating menu generally entails a screening cost to the
principal. A natural question is therefore whether menus can be
constructed to minimize this cost. In
\Cref{SecMain:opt_menu_varyingR}, we show that there exist families of
$\Gfun$ that make the screening cost arbitrarily small, effectively
rendering elicitation ``free." Second, as emphasized in
\Cref{SecMain:connection_to_psr}, separating menus must induce
strictly proper scoring rules. This requires flexibility in contract
parameters, and when some parameters are fixed exogenously, the set
of separating menus can shrink dramatically. In
\Cref{SecMain:opt_menu_constR}, we examine the case where rewards are
held constant and show that under this restriction, there is a unique
separating menu.

Throughout this section, we restrict attention to differentiable
functions $\Gfun$ defined on $[0,\worstPriorNull]$, where
$\worstPriorNull$ denotes the least favorable agent type that the principal is willing to accommodate.  Any $\Gfun$ satisfying
conditions~\eqref{eq:Gfun_ic} and~\eqref{eq:Gfun_participation} is
then differentiable, strictly convex, and nonnegative on this
interval.  Since separating menus guarantee statistical efficiency by
construction, our analysis focuses on their financial cost relative to
a benchmark. As in the definition~\eqref{EqnDefnSeparation} of the
screening cost, we compare each separating menu to a base contract
$(\threshold_{\worstPriorNull}, \reward_{\worstPriorNull},
\cost_{\worstPriorNull})$ designed for the worst type
$\worstPriorNull$. To ensure participation of all types up to
$\worstPriorNull$ and non-participation of types exceeding
$\worstPriorNull$, we impose a \emph{worst-case participation
condition} on the base contract
\begin{align}
\label{eq:worst_participation}
\utilfunc(\worstPriorNull; \threshold_{\worstPriorNull},
\reward_{\worstPriorNull}, \cost_{\worstPriorNull}) = 0,
\end{align}
so that $\Gfun(\worstPriorNull) = 0$.

\subsubsection{Separating menus with minimal screening cost}
\label{SecMain:opt_menu_varyingR}

Implementing a separating menu to elicit agents’ private beliefs
generally entails a screening cost to the principal. By
definition~\eqref{EqnDefnSeparation}, this cost is the extra utility
agents receive relative to the base contract.  From the principal’s
perspective, screening cost is the expected financial loss that arises
when operationalizing a menu rather than offering a single base
contract.  To see this, recall the benchmark in which the principal
offers a fixed base contract $(\threshold_{\worstPriorNull},
\reward_{\worstPriorNull}, \cost_{\worstPriorNull})$ to all
agents. Under this contract, the principal pays a reward
$\reward_{\worstPriorNull}$ upon approval and collects a fixed payment
$\cost_{\worstPriorNull}$ from each agent to cover trial costs. By
contrast, under a separating menu, the principal tailors the contract
terms to each type. For an agent of type $\priorNull$, the principal
may adjust both the statistical threshold and the financial terms:
additional rewards $\reward_\priorNull - \reward_{\worstPriorNull}$
upon approval, and extra payments $\cost_\priorNull -
\cost_{\worstPriorNull}$ from the agent. These adjustments
differentiate the menu from the base contract and implement
type-specific targeting.

The principal’s expected financial return from offering the tailored
contract $(\threshold_\priorNull,\reward_\priorNull,\cost_\priorNull)$
to type $\priorNull$, relative to the base contract, is
\begin{align}
\label{DefPrincipalReturn}
& \underbrace{\big[\cost_\priorNull - \reward_\priorNull \cdot
    \Prob(\mbox{approve agent } \priorNull \mbox{ under }
    \threshold_\priorNull)\big]}_{\mbox{type-tailored contract}} -
\underbrace{\big[\cost_{\worstPriorNull} - \reward_{\worstPriorNull}
    \cdot \Prob(\mbox{approve agent } \priorNull \mbox{ under }
    \threshold_{\worstPriorNull})\big]}_{\mbox{base contract}},
\end{align}
which simplifies to $- \big[\Gfun(\priorNull) -
  \utilfunc(\priorNull;\worstPriorNull)\big]$. Integrating over the
agent population $\typeDist$, the principal’s expected return is therefore the negative of the screening
cost~\eqref{EqnDefnSeparation}. Since incentive compatibility requires
the screening cost to be strictly positive, the principal necessarily
incurs a financial loss from implementing a separating menu. In
practice, resource constraints may make such losses problematic. Thus,
an important design goal is to construct menus that minimize screening
cost, which is equivalent to minimizing the principal’s financial loss
while still guaranteeing incentive compatibility.

Using~\Cref{th:psr_menu}, we can deduce an entire family of
cost-reward pairs that yield separating menus. By careful choice,
these menus can be designed to incur an arbitrarily small screening
cost.  In particular, let $z \mapsto \specfunc(z)$ be any
differentiable function such that
\begin{subequations}
\begin{align}
\label{EqnSpecialAssumptions}
\specfunc(\worstPriorNull) = 0 \quad \mbox{and} \quad \specfunc(z) > 0 \quad \mbox{and} \quad \specfunc'(z) < 0 \qquad
\mbox{for all $z \in [0, \worstPriorNull)$,}
\end{align}
and define the reward-cost pairs
\begin{align}
\label{eq:varyingR_reward}  
\reward_\reportNull & = \bigg( \frac{\reward_{\worstPriorNull}\big[
    \altAppSimple{\threshold_{\worstPriorNull}} -
    \nullAppSimple{\threshold_{\worstPriorNull}}\big]}{\altAppSimple{\threshold_\reportNull}
  - \nullAppSimple{\threshold_\reportNull}} \bigg) \big[ 1 +
  \specfunc(\reportNull) \big], \\
\label{eq:varyingR_cost}    
\cost_\reportNull & = \reward_\reportNull \big[\reportNull
  \nullAppSimple{\threshold_\reportNull} + (1 - \reportNull)
  \altAppSimple{ \threshold_\reportNull} \big] -
\reward_{\worstPriorNull} \big[ \altAppSimple{
    \threshold_{\worstPriorNull}} - \nullAppSimple{
    \threshold_{\worstPriorNull}} \big]
\int_{\reportNull}^{\worstPriorNull} \big[1 + \specfunc(z) \big] dz.
\end{align}
\end{subequations}

\begin{corollary}
  \label{co:Gfun_varyingR}
Consider a base contract $(\threshold_{\worstPriorNull},
\reward_{\worstPriorNull}, \cost_{\worstPriorNull})$ that satisfies
the worst-case participation
constraint~\eqref{eq:worst_participation}, a contract menu
$\ContractMenu$ with type-optimal thresholds $\threshold_\priorNull$
for $\priorNull \in [0, \worstPriorNull]$, and the reward-cost pairs
given by~\eqref{eq:varyingR_reward}--\eqref{eq:varyingR_cost}.  Then
the resulting menu is separating, and it has screening
cost~\eqref{EqnDefnSeparation} relative to the base contract can be
made arbitrarily small by appropriate choice of $\specfunc$.
\end{corollary}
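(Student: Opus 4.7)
The plan is to reduce this to \Cref{th:psr_menu} by reverse-engineering a function $\Gfun$ whose induced menu under~\eqref{eq:psr_reward_cost} coincides with the reward-cost pairs~\eqref{eq:varyingR_reward}--\eqref{eq:varyingR_cost}, verify that this $\Gfun$ satisfies the defining conditions~\eqref{eq:Gfun_ic} and~\eqref{eq:Gfun_participation}, and then compute the screening cost as an explicit functional of $\specfunc$ that can be made small.

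First, I would identify the candidate $\Gfun$. Inverting the reward formula in~\eqref{eq:psr_reward_cost} using the expression in~\eqref{eq:varyingR_reward} gives
\[
\gstar{\reportNull} = -\reward_{\worstPriorNull}\bigl[\altAppSimple{\threshold_{\worstPriorNull}} - \nullAppSimple{\threshold_{\worstPriorNull}}\bigr]\bigl[1 + \specfunc(\reportNull)\bigr].
\]
Integrating backwards from the boundary value $\Gfun(\worstPriorNull) = 0$ forced by the worst-case participation condition~\eqref{eq:worst_participation} yields
\[
\Gfun(\reportNull) = \reward_{\worstPriorNull}\bigl[\altAppSimple{\threshold_{\worstPriorNull}} - \nullAppSimple{\threshold_{\worstPriorNull}}\bigr] \int_{\reportNull}^{\worstPriorNull} \bigl[1 + \specfunc(z)\bigr]\, dz.
\]
A short algebraic expansion then confirms that substituting this $\Gfun$ together with $\gstar{\cdot}$ into the cost formula of~\eqref{eq:psr_reward_cost} recovers~\eqref{eq:varyingR_cost}; thus the stated menu is precisely the one that \Cref{th:psr_menu}(a) associates with this choice of $\Gfun$.

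Second, I would verify the hypotheses of \Cref{th:psr_menu}. The non-trivial power assumption~\eqref{eq:nontrivial_power} combined with $\specfunc \geq 0$ ensures $\gstar{\reportNull} < 0$. The strict inequality $\specfunc'(z) < 0$ makes $\gstar{\cdot}$ strictly increasing in $\reportNull$, which in the differentiable setting is equivalent to strict convexity of $\Gfun$ and yields condition~\eqref{eq:Gfun_ic}. Finally, $\Gfun(\worstPriorNull) = 0$ satisfies condition~\eqref{eq:Gfun_participation}. \Cref{th:psr_menu}(a) therefore certifies that the resulting menu is separating.

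To bound the screening cost, I would use the worst-case participation constraint to simplify the base-contract utility to $\utilfunc(\priorNull;\worstPriorNull) = \reward_{\worstPriorNull}[\altAppSimple{\threshold_{\worstPriorNull}} - \nullAppSimple{\threshold_{\worstPriorNull}}](\worstPriorNull - \priorNull)$. Subtracting this from $\Gfun(\priorNull)$ cancels the constant inside the integrand and leaves
\[
\costSep = \reward_{\worstPriorNull}\bigl[\altAppSimple{\threshold_{\worstPriorNull}} - \nullAppSimple{\threshold_{\worstPriorNull}}\bigr] \int_0^{\worstPriorNull}\!\! \int_{\priorNull}^{\worstPriorNull} \specfunc(z)\, dz\, d\typeDist(\priorNull),
\]
which is linear in $\specfunc$. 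To conclude, I would exhibit a one-parameter family $\specfunc_\delta(z) := \delta\,\psi(z)$ for $\delta > 0$, where $\psi$ is any fixed smooth, positive, strictly decreasing function with $\psi(\worstPriorNull) = 0$; each $\specfunc_\delta$ satisfies~\eqref{EqnSpecialAssumptions} strictly, while $\costSep$ scales linearly in $\delta$ and vanishes as $\delta \to 0^+$. The main conceptual step is the reverse engineering of $\Gfun$ in the first paragraph; once it is in hand, both the convexity checks and the screening-cost computation are essentially bookkeeping.
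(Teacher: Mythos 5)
Your proposal is correct and follows essentially the same route as the paper: it reduces to \Cref{th:psr_menu} via exactly the function $\Gfun$ in equation~\eqref{eq:varyingR_G} (which you recover by integrating $\gstar{\cdot}$ from the boundary condition $\Gfun(\worstPriorNull)=0$), verifies strict convexity, negativity of the slope, and participation just as in \Cref{proof:Gfun_varyingR}, and then drives the screening cost to zero by shrinking $\specfunc$. Your closed-form expression $\costSep = \reward_{\worstPriorNull}\bigl[\altAppSimple{\threshold_{\worstPriorNull}} - \nullAppSimple{\threshold_{\worstPriorNull}}\bigr]\int\!\!\int_{\priorNull}^{\worstPriorNull}\specfunc(z)\,dz\,d\typeDist(\priorNull)$ is simply a cleaner packaging of the paper's limiting computation, so no substantive difference or gap remains.
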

\noindent See \Cref{proof:Gfun_varyingR} for the proof. \\

To elaborate upon the connection to~\Cref{th:psr_menu}, designing a
separating menu with a small screening cost is equivalent to designing
a function $\Gfun$---one which satisfies the properties of the
theorem---such that it lies just above the utility curve
$\PsiBase(\priorNull) \defn \utilfunc(\priorNull;\worstPriorNull)$
under the base contract.  So that the theorem can be applied, the
function $\Gfun$ should be strictly convex. There are many possible
choices of $\Gfun$ that satisfy this requirement.  Given a function
$\specfunc$ satisfying the conditions~\eqref{EqnSpecialAssumptions},
one choice of $\Gfun$---the one used in our proof---is
\begin{align}
  \label{eq:varyingR_G}
  \Gfun(\priorNull) = \reward_{\worstPriorNull}\big[
    \altAppSimple{\threshold_{\worstPriorNull}} -
    \nullAppSimple{\threshold_{\worstPriorNull}}\big]
  \int_{\priorNull}^{\worstPriorNull} \, \big[1 + \specfunc(z) \big]
  dz.
\end{align}
By choosing $\epsilon$ to be small, the principal achieves statistical
efficiency while incurring negligible financial loss relative to the
single-contract benchmark. Thus, the principal effectively elicits
agents' private types for free.

\subsubsection{Separating menus with fixed rewards}
\label{SecMain:opt_menu_constR}

So far, our construction of separating menus has assumed that the
principal retains full control over all components of
contracts---namely, the $p$-value threshold, reward, and
cost. However, when one or more parameters are constrained, it may no
longer be possible to induce a strictly proper scoring rule for all
agent types. Such constraints are not merely theoretical: in practice,
rewards or costs may be determined by forces outside the principal’s
control. For example, in regulatory settings, the reward may be
determined by market forces rather than by the regulator. In our
running example of drug testing, the market generates the revenue for
an approved drug, which is typically similar across firms producing
treatments for the same condition. While the Food and
Drug Administration (FDA) cannot set these
rewards, it could, in principle, adjust the cost by imposing
additional trial requirements or requiring a financial down payment.

With this motivation, we now study what separating menus can be constructed when contract parameters are constrained. Since the $p$-value thresholds are dictated by the principal’s statistical objective, the natural candidates are rewards or costs. In this section, we focus on the case in which all contracts share the same reward. The case of fixed costs is less compelling: it requires the power function $\altAppSimple{\cdot}$ to satisfy a form of convexity (see discussion in \Cref{SecAppFixedCostMenu}), which is unnatural since power functions are usually concave in $\threshold$.

\paragraph{Assumptions.} Fixing rewards makes the design of separating menus more challenging. In particular, we require additional structure on the power function $\altAppSimple{\threshold}$. Our result applies when $\altAppSimple{\threshold}$ satisfies two conditions:
\begin{subequations}
  \label{EqnPowerCond}
    \begin{align}
      \label{EqnPowerCondA}
   \mbox{The function $\altAppSimple{\cdot}$ is concave and
     differentiable, and} \\
\label{EqnPowerCondB}
\beta_1^\prime(\threshold_\priorNull) > 1 \quad \mbox{for all} \quad
\priorNull \in [\bestPriorNull, \worstPriorNull].
    \end{align}
Condition~\eqref{EqnPowerCondA} is not particularly restrictive: by allowing randomized tests, the set of achievable $(\threshold,\altAppSimple{\threshold})$ pairs form a convex subset of $[0,1]\times[0,1]$, whose upper boundary (the power function) is concave and attained by likelihood ratio tests. Condition~\eqref{EqnPowerCondB} is more stringent, and we return to its interpretation below.

Lastly, we also assume that the type-optimal threshold mapping $\priorNull \mapsto \fthreshold{\priorNull}$ is strictly decreasing:
\begin{align}
\label{EqnStrictMonotone}
    \threshold^\prime_\priorNull < 0.
\end{align}
\end{subequations}
This is a mild requirement. For Bayes-optimal decision rules, the mapping~\eqref{EqnBayesThreshold} is strictly decreasing whenever the likelihood ratio is continuous and strictly monotone. For the FDR-control rule~\eqref{EqnFDRThreshold}, strict monotonicity follows under condition~\eqref{EqnPowerCondA} (see proof in \Cref{SecProofLemMonotone}).

\begin{corollary}
  \label{co:Gfun_constR} 
Consider a contract menu $\ContractMenu$ with a constant reward $\reward > 0$, type-optimal thresholds $\threshold_\priorNull
$ for $\priorNull \in [\bestPriorNull, \worstPriorNull]$, and the costs
\begin{align}
\label{eq:constR_cost}    
  \cost_\reportNull = \reward \big[\reportNull
    \nullAppSimple{\threshold_\reportNull} + (1-\reportNull)
    \altAppSimple{\threshold_\reportNull} \big] - \reward
  \int^{\worstPriorNull}_\reportNull \altAppSimple{\threshold_z} -
  \nullAppSimple{\threshold_z} dz.
\end{align}
Then under conditions~\eqref{EqnPowerCondA}--\eqref{EqnPowerCondB} on
the power function and condition~\eqref{EqnStrictMonotone} on the
threshold mapping, this defines the unique separating menu with
constant reward $\reward$.
\end{corollary}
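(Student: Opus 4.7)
The plan is to specialize \Cref{th:psr_menu} to the fixed-reward case, which will pin down both the function $\Gfun$ and the menu uniquely. From the reward formula in~\eqref{eq:psr_reward_cost}, imposing $\reward_\reportNull \equiv \reward$ for all $\reportNull \in [\bestPriorNull, \worstPriorNull]$ forces the subgradient to be
\begin{align*}
\gstar{\reportNull} \;=\; \reward \bigl[\nullAppSimple{\threshold_\reportNull} - \altAppSimple{\threshold_\reportNull}\bigr] \;=\; \reward\bigl[\threshold_\reportNull - \altAppSimple{\threshold_\reportNull}\bigr],
\end{align*}
using $\nullAppSimple{\threshold} = \threshold$. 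The nontrivial-power assumption~\eqref{eq:nontrivial_power} guarantees $\gstar{\reportNull} < 0$ as required by~\Cref{th:psr_menu}.

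Next, I would recover $\Gfun$ by integration. In the differentiable setting, condition~\eqref{eq:Gfun_ic} forces $\Gfun'(\reportNull) = \gstar{\reportNull}$, and the worst-case participation condition~\eqref{eq:worst_participation} fixes $\Gfun(\worstPriorNull) = 0$. This gives the unique antiderivative
\begin{align*}
\Gfun(\priorNull) \;=\; \reward \int_\priorNull^{\worstPriorNull} \bigl[\altAppSimple{\threshold_z} - \threshold_z\bigr]\, dz.
\end{align*}
Substituting this $\Gfun$ and $\gstar{\reportNull}$ into the cost formula of~\eqref{eq:psr_reward_cost} and simplifying (the $\bigl[\altAppSimple{\threshold_\reportNull}-\threshold_\reportNull\bigr]$ factor in $\gstar{\reportNull}$ cancels with the denominator in the bracket) produces exactly~\eqref{eq:constR_cost}. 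This is a routine algebraic check that I would carry out but not spell out in detail here.

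The substantive step is to verify that this candidate $\Gfun$ actually satisfies the hypotheses of~\Cref{th:psr_menu}, namely strict convexity~\eqref{eq:Gfun_ic} and nonnegativity~\eqref{eq:Gfun_participation}. Since $\Gfun'(\reportNull) = \gstar{\reportNull} < 0$, the function is strictly decreasing, and combined with $\Gfun(\worstPriorNull) = 0$ this yields $\Gfun \geq 0$ on $[\bestPriorNull, \worstPriorNull]$. Strict convexity requires $\gstar{\reportNull}$ to be strictly increasing in $\reportNull$. Differentiating,
\begin{align*}
\tfrac{d}{d\reportNull}\, \gstar{\reportNull} \;=\; \reward \bigl[1 - \altAppSimple{\threshold_\reportNull}'\bigr]\, \threshold_\reportNull',
\end{align*}
and positivity of this expression is exactly where I expect the main obstacle to lie. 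It requires a sign match between the two factors: condition~\eqref{EqnStrictMonotone} ensures $\threshold_\reportNull' < 0$, and condition~\eqref{EqnPowerCondB} ensures $1 - \altAppSimple{\threshold_\reportNull}' < 0$. Together they yield $\tfrac{d}{d\reportNull}\gstar{\reportNull} > 0$, establishing strict convexity. I would emphasize here why the concavity hypothesis~\eqref{EqnPowerCondA} is the natural regularity framing~\eqref{EqnPowerCondB}.

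Finally, for uniqueness, I would observe that once $\reward$ is fixed and the type-optimal thresholds $\threshold_\priorNull$ are specified, the reward equation in~\eqref{eq:psr_reward_cost} determines $\gstar{\reportNull}$ pointwise, and the participation condition~\eqref{eq:Gfun_participation} fixes the free constant of integration for $\Gfun$. The ``only if" direction of \Cref{th:psr_menu} then implies that any other separating menu with constant reward $\reward$ must arise from this very same $\Gfun$, hence coincides with the menu above. This gives uniqueness, completing the proof.
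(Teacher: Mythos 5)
Your proposal is correct and follows essentially the same route as the paper's proof: pin down $\gstar{\reportNull} = \reward\bigl[\nullAppSimple{\threshold_\reportNull} - \altAppSimple{\threshold_\reportNull}\bigr]$, verify strict convexity via $\Gfun''(\priorNull) = \reward\bigl[1-\beta_1'(\threshold_\priorNull)\bigr]\threshold'_\priorNull > 0$ using~\eqref{EqnPowerCondB} and~\eqref{EqnStrictMonotone}, check participation from $\Gfun(\worstPriorNull)=0$, and deduce uniqueness because the derivative is specified everywhere and the boundary condition removes the additive constant. One small slip: in the uniqueness paragraph the constant of integration is fixed by the worst-case participation normalization~\eqref{eq:worst_participation} (which you correctly invoked earlier), not by~\eqref{eq:Gfun_participation}, which only requires $\Gfun(\worstPriorNull)\geq 0$.
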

\noindent See~\Cref{proof:Gfun_constR} for the proof. \\

In this case, the connection to~\Cref{th:psr_menu} is provided by the
function
\begin{align}
\label{eq:constR_G}
\Gfun(\priorNull) & \defn \reward \int_{\priorNull}^{\worstPriorNull}
\big[\altAppSimple{\threshold_z} - \nullAppSimple{\threshold_z} \big]
dz, \quad \mbox{with derivative} \quad
\Gfun'(\priorNull) = \reward \big[
  \nullAppSimple{\threshold_\priorNull} -
  \altAppSimple{\threshold_\priorNull} \big].
\end{align}
Since we assume $\priorNull \mapsto \threshold_{\priorNull}$ is
strictly decreasing, condition~\eqref{EqnPowerCondB} ensures
$\Gfun''(\priorNull) \defn \reward\big[1 -
  \beta_1'(\threshold_{\priorNull})\big] \threshold'_{\priorNull}$ is
strictly positive, so $\Gfun$ is strictly convex. Since $\Gfun$ is
uniquely determined within the class of differentiable functions under
these conditions, the resulting separating menu is unique among all
constructions derived from differentiable $\Gfun$.

When the principal controls all contract parameters, she can design
menus for any test satisfying the non-trivial power
assumption~\eqref{eq:nontrivial_power} and for all agent types on
$[0,1]$. This flexibility is powerful---it allows full elicitation of
private beliefs---but it also obscures how elicitation depends on the
structure of the hypothesis test itself, since rewards and costs can
always be adjusted to make things work. By contrast, when the reward
is fixed, the dependence on the test becomes much clearer: the power
function directly determines both the level of information rent and
the extent of elicitable types.

\paragraph{Elicitation depends on test structure:} Recall
that $\Gfun(\priorNull)$ represents the information rent allocated to
type $\priorNull$. Under the fixed-reward
construction~\eqref{eq:constR_G}, $\Gfun$ is pinned down directly by
the power function $\altAppSimple{\cdot}$, so that any change in the
test translates immediately into a change in information rent. For
example, if the principal employs a more powerful test $\tilde
\beta_1(\threshold)$ with $\tilde \beta_1(\threshold) >
\altAppSimple{\threshold}$ for all $\threshold \in [0,1]$, the
information rent increases strictly: greater statistical power raises
agents’ probabilities of receiving approvals and hence their utilities
under the base contract, which in turn requires higher rents to
maintain incentive compatibility. While information rent is always
driven by incentive compatibility, the fixed-reward setting makes its
dependence on the structure of the statistical test fully
transparent. The structural condition~\eqref{EqnPowerCondB} also
highlights the limits of elicitation under fixed rewards. Given a
concave, differentiable power function, the inequality
$\beta_1'(\threshold) > 1$ holds only on an interval $[0,\bar
  \threshold]$ (see \Cref{FigPSRConst}(a) for an illustration). This
places an upper bound $\bar \threshold$ on type-optimal thresholds in
any separating menu. Since $\fthreshold{\priorNull}$ is decreasing in
$\priorNull$, this implies that only agent types in
$[\bestPriorNull,\worstPriorNull]$ can be elicited, where
$\bestPriorNull$ is the type assigned threshold $\bar
\threshold$. Hence, although the principal intends to elicit all types
$\priorNull \in [0, \worstPriorNull]$, fixing the contract reward to
be constant imposes structural limits on which types can be elicited,
and these limits depend directly on the hypothesis test.

\paragraph{Financial cost of fixed-reward menus:} Compared to the menus constructed from $\Gfun$ in~\eqref{eq:varyingR_G}, the fixed-reward setting entails a non-trivial screening cost. With reward $\reward$ held constant, the curvature of $\Gfun$ is large, since $\Gfun''(\priorNull)$ cannot be tuned close to zero. As a result, the utility curve $\Gfun$ lies strictly above the linear baseline $\PsiBase(\priorNull)$ associated with the base contract, and the screening cost cannot be made arbitrarily small.
In some regulatory environments, however, the constant reward
$\reward$ is not borne by the principal---for example, when it
reflects the average market profits accruing to successful firms. In
such cases, the principal can implement the menu by charging
$\cost_{\worstPriorNull}$ to the worst type and requiring better types
to pay a surcharge $\cost_{\priorNull} - \cost_{\worstPriorNull}$ in
exchange for looser thresholds $\threshold_{\priorNull}$. Under this
arrangement, the principal earns strictly positive revenue from better
types, while maintaining incentive compatibility. This design is
Pareto-improving: the principal can screen agents by type and
implement type-optimal thresholds, while agents receive contracts
tailored to their beliefs and strictly higher utilities than under the
base contract. \\


\subsection{Numerical studies}
\label{secMain:opt_menu_plots}

Next, we present numerical results illustrating the convex functions
$\Gfun$ that underlie~\Cref{co:Gfun_varyingR,co:Gfun_constR}, defined
in equations~\eqref{eq:varyingR_G} and~\eqref{eq:constR_G},
respectively. We do so in the context of Gaussian mean testing:
consider a testing problem defined by the parameter space $\Theta =
\{0, \theta_1\}$, with a null value $\theta_0 = 0$ and a non-null
value $\theta_1 > 0$.  A given observation $Z \sim \mathcal{N}(\theta,
1)$ can be converted into a $p$-value via the transformation
$\evidence = 1 - \Phi(Z)$, where $\Phi$ is the standard normal
cumulative distribution function.  With this set-up, the null
rejection probability is $\beta_0(\threshold) = \threshold$ by
construction, while the power function under the alternative is given
by:
\begin{align*}
\altAppSimple{\threshold} &= 1 - \Phi(\Phi^{-1}(1-\threshold) -
\theta_1),
\end{align*}
where $\Phi^{-1}$ denotes the quantile function of the standard normal
distribution.

\begin{figure}[htbp]
    \centering \widgraph{0.5\textwidth}{\figdir/PSRMenu_varyingR}
    \caption{Principal's expected financial
      return~\eqref{DefPrincipalReturn} from offering a separating
      menu rather than the base contract
      $(\threshold_{\worstPriorNull}, \reward_{\worstPriorNull},
      \cost_{\worstPriorNull}) = (0.004, 100, 1.3)$ to each agent
      type. The separating menus are constructed using $\Gfun$ from
      equation~\eqref{eq:varyingR_G}, where $\eps{z} =
      \eta\cdot(1-z)^2$ for the choices $\eta \in \{0.01, 0.1, 0.5,
      1\}$.}
    \label{FigPSRVarying}
\end{figure}

First, consider a type-dependent reward with the function $\Gfun$ from
equation~\eqref{eq:varyingR_G}, chosen to maximize the principal's
expected financial return.  For illustration, we focus on an
alternative with $\theta_1 = 1$. We consider a worst agent type with
prior null probability $\worstPriorNull = 0.8$ and use type-optimal
thresholds~\eqref{EqnFDRThreshold} for FDR control. The false
discovery rate is constrained at level $0.25$, which implies an
optimal $p$-value threshold of $\threshold_{\worstPriorNull} = 0.004$
for this agent. The base contract fixes the reward at
$\reward_{\worstPriorNull} = 100$, with the cost
$\cost_{\worstPriorNull}$ calibrated so that the worst type obtains
zero utility under this contract.

\begin{figure}[h!]
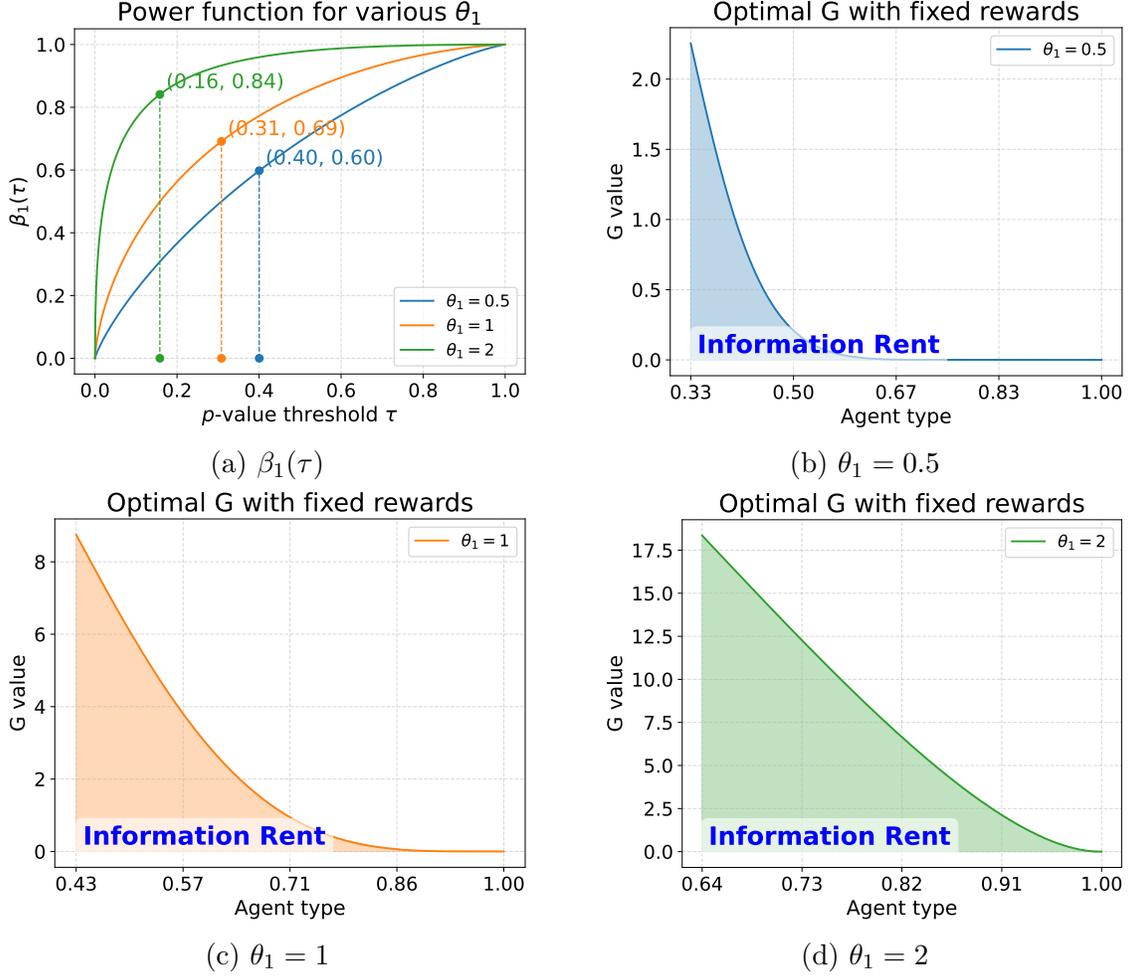

  \begin{center}
    \begin{tabular}{ccc}
    \widgraph{0.43\textwidth}{\figdir/PowerCurve_GMT} &&
    \widgraph{0.43\textwidth}{\figdir/PSRMenu_constR_mean_0.5} \\ (a)
    $\altAppSimple{\threshold}$ && (b) $\theta_1 = 0.5$
    \\ \widgraph{0.43\textwidth}{\figdir/PSRMenu_constR_mean_1} &&
    \widgraph{0.43\textwidth}{\figdir/PSRMenu_constR_mean_2} \\ (c)
    $\theta_1 = 1$ && (d) $\theta_1 = 2$
    \end{tabular}
    \caption{(a) Power functions $\altAppSimple{\threshold}$ under
      alternative hypotheses $\theta_1 \in \{0.5, 1, 2\}$, with the
      maximum threshold $\threshold$ satisfying
      condition~\eqref{EqnPowerCondB} marked. (b)--(d) Plots of the
      function $\Gfun$ from equation~\eqref{eq:constR_G} under the
      same alternatives. The shaded area indicates the total
      information rent for uniformly distributed agent types under the
      separating menu constructed from $\Gfun$. }
    \label{FigPSRConst}
  \end{center}
\end{figure}

As an illustration of the behavior of $\Gfun$, we consider the
perturbation function $\eps{z} \defn \eta\cdot(1-z)^2$ for some $\eta
> 0$; in \Cref{FigPSRVarying}, we plot the principal's expected
financial return~\eqref{DefPrincipalReturn}, relative to the base
contract, when using a separating menu constructed from $\Gfun$ for
the four choices $\eta \in \{0.01, 0.1, 0.5, 1\}$.  Here, the area
under each curve represents the principal's total financial loss for a
population of agents with uniformly distributed types. Observe that
setting $\eta$ close to zero results in smaller and smaller screening
cost.

We next turn to the case of a constant reward across all contracts. We
fix the reward at $100$ and again use type-optimal
thresholds~\eqref{EqnFDRThreshold} with an FDR constraint at level
$0.25$. As analyzed in~\eqref{eq:constR_G}, the function $\Gfun$ now
depends on the power function $\altAppSimple{\threshold}$. We consider
three different alternatives for the nonnull hypothesis: $\theta_1 \in
\{0.5, 1, 2\}$, with the null still given by $\theta_0 = 0$.

In~\Cref{FigPSRConst}(a), we plot the power functions corresponding to
these values of $\theta_1$, and mark the maximum $p$-value threshold
for which the condition $\beta_1^\prime(\threshold) > 1$ from
equation~\eqref{EqnPowerCondB} holds. As $\theta_1$ increases, the
power function becomes steeper, and the maximum threshold satisfying
this condition decreases. This restricts the range of agent types for
whom contracts with constant reward are incentive-compatible. As
discussed in \Cref{SecMain:opt_menu_varyingR}, this maximum threshold
imposes a lower bound on the agent types.  The effect is exhibited in
panels (b) through (d) of~\Cref{FigPSRConst}, where we plot the
corresponding functions $\Gfun$. For example, when $\theta_1 = 0.5$,
incentive-compatible contracts can be offered to agents with prior
null beliefs in the interval $[0.33, 1]$. As $\theta_1$ increases to
2, this range shifts toward agents with higher $\priorNull$, i.e.,
agents who are more likely to face the null hypothesis. This shift
occurs because when $\theta_1 = 2$, the maximum separating threshold
is approximately 0.16, which is too stringent to be optimal for agents
with lower $\priorNull$.  The shaded regions in panels (b)--(d)
represent the total information rent incurred when offering the menu
to a uniformly distributed population of agent types. As previously
discussed, we see that a more powerful test increases the information rent
required to screen better types.


\section{Discussion}
\label{SecDiscussion}

In this paper, we studied hypothesis testing over a heterogeneous
population of strategic agents with private information.  In this
setting, any single uniform test yields sub-optimal performance, due
to the underlying heterogeneity in agent types.  At the other extreme,
an oracle given \emph{a priori} access to agent types can construct an
optimal test for each type.  Our main result was to show that it is
possible for the principal to design a separating menu of
type-tailored tests, coupled with appropriately designed payoffs,
that induce agents to self-select according to their private
information, thereby implementing type-optimal thresholds and
achieving statistical efficiency. Strikingly, this improvement comes
at negligible additional cost relative to single-test designs,
demonstrating that information elicitation through incentive design
can lead to better statistical performance with little cost.

Beyond the technical results specific to this paper, our analysis also
offers some broader conceptual insights. First, it underscores the
importance of treating agents as strategic and heterogeneous:
conventional approaches that assume passive participants miss key
opportunities for improving statistical outcomes. Second, it
demonstrates how hypothesis tests themselves can serve as instruments
of information elicitation. By embedding contracts in statistical
design, the principal can shape agents’ utilities into a proper
scoring rule, ensuring incentive compatibility. More broadly, the
results further develop the connection between mechanism design and
statistical decision-making and highlight that design in strategic
settings requires jointly considering statistical objectives and agent
behavior.

There are a number of natural extensions to the current analysis.
First, we assumed that each agent’s private information is
one-dimensional, summarizing only their belief about type. In
practice, agents may hold richer, multi-dimensional information; for
example, a researcher’s prior knowledge across multiple outcomes or a
drug developer’s beliefs about several treatment effects.  Second, the
current analysis is predicated upon the principal knowing the test's
power function; it would be interesting to explore extensions
involving partial knowledge.  A final opportunity lies in the scope of
design and behavior considered. We focused on $p$-value thresholds and
binary participation, abstracting away other levers. In reality, a
regulator may be able to influence sample sizes, stopping rules, or
test statistics, while agents may strategically respond to these
choices. Extending the analysis to such richer action spaces could
capture a wider range of interactions.


\subsubsection*{Acknowledgements}
This work was partially funded by NSF-DMS-2413875 to SB and MJW, and the Cecil H. Green Chair and Ford Professorship to MJW.


\AtNextBibliography{\small} \printbibliography


\appendix

\renewcommand{\theequation}{A\arabic{equation}} 

\section{Separating menus for discrete types}
\label{SecApp:finite_type_menu}

In this section, we present a simpler procedure for constructing
separating menus when the population consists of a finite number
$\typeNum \geq 2$ of agent types. This setting can be viewed as a
special case of the general construction in~\Cref{th:psr_menu}, but
here we provide a direct derivation based on an iterative argument. We
first describe the procedure, then explain how it relates to the
convex-function approach, and finally illustrate its properties with
numerical examples.

Throughout, we assume that there are $\typeNum$ distinct agent types
with ordered prior null probabilities $\npriorNull{1} < \npriorNull{2}
< \dots < \npriorNull{\typeNum}$.  For menu design, it suffices that
the principal knows this support set $\supp(\typeDist) = \{
\npriorNull{t} \}_{t=1}^\typeNum$, without needing to know the
distribution $\typeDist$ over these types. Importantly, the type of
any individual agent remains private.

\subsection{Iterative procedure for menu construction}
\label{SecApp:finite_menu_algo}

We first set up the notation needed to describe our iterative procedure.
To avoid double subscripts, we index the contract menu by the
integer $t \in [\typeNum] \defn \{1, \dots, \typeNum \}$, so that any
menu takes the form $\ContractMenu = \big \{ (\threshold_t, \reward_t,
\cost_t) \mid t \in [\typeNum] \big \}$.  If an agent of type $\npriorNull{t}$ selects contract $t$, then the menu is incentive-compatible when $\SelFun(\npriorNull{t}) = \npriorNull{t}$ for all $t$.

We now introduce an iterative procedure for designing a contract menu
$\ContractMenu$. For $t \in [\typeNum]$,
define the scalars $\Delta_t \defn \altAppSimple{\threshold_t} -
\nullAppSimple{\threshold_t}$, along with the intervals $\Interval_t
\defn [\intleft_t, \intright_t]$ with endpoints
\begin{align*}
  \intleft_t & \defn \npriorNull{t} \big[\reward_t \Delta_t -
  \reward_{t-1}\Delta_{t-1}\big] +  \big[
  \reward_{t-1} \altAppSimple{\threshold_{t-1}} - \reward_t
  \altAppSimple{\threshold_t} + \cost_t \big],
\quad \mbox{and} \\
\intright_t & \defn \npriorNull{t-1} \big[ \reward_t \Delta_t -
  \reward_{t-1}\Delta_{t-1} \big] + \big[ \reward_{t-1}
  \altAppSimple{\threshold_{t-1}} - \reward_t
  \altAppSimple{\threshold_t} + \cost_t \big].
\end{align*}
The procedure takes as input the pair $(\reward_\typeNum,
\cost_\typeNum)$, and a sequence $\{ \eps{t} \}_{t=2}^\typeNum$ of
strictly positive scalars, and constructs the menu
$\{(\threshold_t, \reward_t, \cost_t \}_{t=1}^\typeNum$ as follows:
\begin{subequations} 
\begin{enumerate}
\item[(1)] For each $t \in [\typeNum]$, compute the type-optimal threshold
  $\threshold_{\npriorNull{t}}$ and
  set $\threshold_t \equiv \threshold_{\npriorNull{t}}$.
\item[(2)] In a backward recursion $t = \typeNum, \typeNum-1, \ldots,
  2$:
\begin{align}
\label{eq:finite_menu_reward}
{\bf{Update:}} \quad & \reward_{t-1} = \reward_{t}
\frac{\Delta_t}{\Delta_{t-1}} + \eps{t}, \\
\label{eq:finite_menu_cost}        
{\bf{Choose:}} \quad & \cost_{t-1} \in \big[\intleft_t, \intright_t
  \big].
\end{align}
\end{enumerate}

The following proposition certifies when this procedure correctly
computes a separating menu. In stating this result, it is helpful to
adopt the notation $\utilfunc(\priorNull; \threshold, \reward, \cost)$
to track the explicit dependence of the utility function~\eqref{EqnOptInUtility} on the triple
$(\threshold, \reward, \cost)$.
\begin{proposition}
\label{prop:finite_menu}
Assume that initial reward-cost pair $(\reward_\typeNum,
\cost_\typeNum)$ is chosen such that the participation constraint for
agent type $\npriorNull{\typeNum}$ holds---viz.
\begin{align}
\label{eq:finite_menu_participate}
    \utilfunc(\npriorNull{\typeNum}; \threshold_\typeNum,
    \reward_\typeNum, \cost_\typeNum) \geq 0.
\end{align}
Then for any strictly positive sequence $\{\eps{t} \}_{t=2}^\typeNum$,
the iterative procedure returns a contract menu $\ContractMenu =
\{(\threshold_t, \reward_t, \cost_t)\}_{t=1}^\typeNum$ that is
separating.
\end{proposition}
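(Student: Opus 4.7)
My plan is to verify the separating conditions~\eqref{def:feasible_menu}(a)--(b) directly by exploiting the fact that, for each fixed contract $t$, the utility $\Psi(q;\threshold_t,\reward_t,\cost_t) = -q\reward_t \Delta_t + \reward_t \altAppSimple{\threshold_t} - \cost_t$ is an affine function of the prior $q$ with slope $-\reward_t \Delta_t < 0$. The backward reward update~\eqref{eq:finite_menu_reward} is equivalent to $\reward_{t-1}\Delta_{t-1} - \reward_t \Delta_t = \eps{t}\Delta_{t-1} > 0$, so the slopes become strictly more negative as $t$ decreases. This gives the key \emph{single-crossing} structure: for any two contracts $s<t$, the utility gap $q\mapsto \Psi(q;s)-\Psi(q;t)$ is strictly affine-decreasing in $q$.

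The main step is to show incentive compatibility between adjacent contracts. For indices $t-1$ and $t$, I compute the gap $\Psi(q;t-1)-\Psi(q;t)$ as a linear function of $q$ and evaluate it at the two thresholds $q=\npriorNull{t-1}$ and $q=\npriorNull{t}$. A direct rearrangement shows that the requirement
\begin{align*}
\Psi(\npriorNull{t-1};t-1)\geq \Psi(\npriorNull{t-1};t) \quad\text{and}\quad \Psi(\npriorNull{t};t)\geq \Psi(\npriorNull{t};t-1)
\end{align*}
is equivalent to $\cost_{t-1}\in[\intleft_t,\intright_t]$; this matches the paper's definitions of $\intleft_t$ and $\intright_t$, since $U-L = (\npriorNull{t-1}-\npriorNull{t})(\reward_t\Delta_t - \reward_{t-1}\Delta_{t-1})>0$ confirms the interval is nonempty. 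Hence the procedure's cost choice~\eqref{eq:finite_menu_cost} guarantees adjacent incentive compatibility by construction.

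Next I upgrade adjacent IC to global IC via a telescoping argument. For any $s<t$, write $\Psi(q;s)-\Psi(q;t) = \sum_{k=s+1}^{t}[\Psi(q;k-1)-\Psi(q;k)]$. Each summand is decreasing in $q$ and, by adjacent IC, nonpositive at $q=\npriorNull{k}$, hence nonpositive at every $q\geq\npriorNull{k}$; since $\npriorNull{t}\geq\npriorNull{k}$ for $k\leq t$, summing gives $\Psi(\npriorNull{t};s)\leq \Psi(\npriorNull{t};t)$. The symmetric argument applied with signs reversed handles $s>t$, using that adjacent IC gives $\Psi(q;k)-\Psi(q;k+1)\geq 0$ for $q\leq \npriorNull{k}$. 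This establishes $\SelFun(\npriorNull{t})=\npriorNull{t}$ for every $t\in[\typeNum]$, which is~\eqref{def:feasible_menu}(a). Finally, the participation condition~\eqref{def:feasible_menu}(b) follows by combining the assumed base inequality~\eqref{eq:finite_menu_participate} with global IC and the slope structure: for $t<\typeNum$, $\Psi(\npriorNull{t};t)\geq \Psi(\npriorNull{t};\typeNum) \geq \Psi(\npriorNull{\typeNum};\typeNum)\geq 0$, where the middle inequality uses that $q\mapsto \Psi(q;\typeNum)$ is decreasing and $\npriorNull{t}<\npriorNull{\typeNum}$.

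The hardest part is not any single calculation but keeping the sign conventions straight---both $\reward_t\Delta_t$ being positive, the slope of $\Psi(\,\cdot\,;t)$ being $-\reward_t\Delta_t<0$, and the recursion making $\reward_t\Delta_t$ \emph{decreasing} in $t$ rather than increasing. Once single-crossing is phrased correctly in terms of the gap function $D_{s,t}(q)\defn \Psi(q;s)-\Psi(q;t)$, the telescoping step and the bootstrap from $\npriorNull{\typeNum}$'s participation to all lower types follow cleanly. No new structural assumption on $\altAppSimple{\cdot}$ is needed beyond the non-trivial-power condition~\eqref{eq:nontrivial_power}, because strict positivity of $\eps{t}$ is what creates the slope separation that drives single-crossing.
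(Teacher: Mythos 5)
Your proof is correct and follows essentially the same route as the paper's: adjacent incentive compatibility via the cost interval (the paper's Lemma~\ref{lem:local_incentive_compatible}), extension to non-adjacent types through the slope/single-crossing structure (the paper's Lemma~\ref{lem:ICOrdering}), a chaining/telescoping step for global IC, and the same bootstrap of participation from type $\npriorNull{\typeNum}$ down to all lower types. The only cosmetic difference is that you work with weak inequalities at the endpoints of $[\intleft_t,\intright_t]$ whereas the paper's lemma asserts strict adjacent IC (implicitly taking $\cost_{t-1}$ in the open interval), a discrepancy already present in the paper itself and not a gap in your argument.
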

\end{subequations}
\noindent See~\Cref{proof:finite_menu} for the proof of this claim. \\

This iterative procedure can be understood as a discrete analogue of
the general convex-function construction in~\Cref{th:psr_menu}. In the
finite-type case, the separating menu induces a discrete convex
function $\mathcal{G}$.  Specifically,
step~\eqref{eq:finite_menu_reward} ensures $-\reward_{t-1}\Delta_{t-1}
< -\reward_t\Delta_t$, so that the slope of the utility function
strictly increases with type. Since the slope of the utility under
contract $(\threshold_\priorNull, \reward_\priorNull,
\cost_\priorNull)$ coincides with the subgradient of $\mathcal{G}$ at
$\priorNull$, step~\eqref{eq:finite_menu_reward} guarantees that
$\gstar{\npriorNull{t-1}} < \gstar{\npriorNull{t}}$, yielding a
sequence of increasing subgradients. Consequently, the
truthful-reporting utility becomes convex in agent
type. Step~\eqref{eq:finite_menu_cost} then chooses the intercept to
make the utility coincide with the supporting hyperplane of
$\mathcal{G}$ at each type. In this way, the procedure constructs a
discrete convex function satisfying property~\eqref{eq:Gfun_ic},
ensuring incentive
compatibility. Condition~\eqref{eq:finite_menu_participate} plays the
role of~\eqref{eq:Gfun_participation}, guaranteeing participation for
the worst type, and hence for all others. Finally, just as in the
general construction, separating menus are not unique. The slack
parameters ${\eps{t}}$ can be chosen arbitrarily (so long as they are
positive), and whenever the interval $[\intleft_t,\intright_t]$ is
non-degenerate, the cost $\cost_t$ can be varied. Thus there exists an
infinite family of separating menus. In the next subsection, we
illustrate how these choices affect the menus in numerical
simulations.

\subsection{Illustration with Gaussian mean testing}
\label{SecMain:finite_menu_plots}

In this section, we illustrate the separating menu constructed from procedure~\eqref{eq:finite_menu_reward}--\eqref{eq:finite_menu_cost} using the Gaussian mean testing setup from~\Cref{secMain:opt_menu_plots}. Consider a binary test with null $\theta_0 = 0$ and $\theta_1 = 1$. We
focus on a population consisting of $\typeNum = 5$ agent types, with
prior null probabilities $\priorNull$ ranging over the set
\mbox{$\supp(\typeDist) = \{ 0.3, 0.4, 0.5, 0.6, 0.7\}$.} With a
target false discovery rate $\fdrLevel = 0.25$, the
type-optimal $p$-value thresholds computed from~\eqref{EqnFDRThreshold} are $\{0.74, 0.38, 0.18, 0.07, 0.02\}$.  To ensure
that the participation constraint~\eqref{eq:finite_menu_participate}
is satisfied for agent $\typeNum = 5$ with $\priorNull = 0.7$, we set the cost-reward pair $(\cost_5, \reward_5) = (5,100)$.

To visualize the resulting menu, we plot the maximum expected utility attainable by each type $\priorNull \in [0,1]$, along with the contract that delivers this utility. Since the menu is separating, the resulting utility envelope coincides with a piecewise convex function. Each linear segment corresponds to the utility function~\eqref{EqnOptInUtility} induced by a specific contract and the range of types it attracts. The five designed types are highlighted in the plots, showing the specific contract each selects.

\Cref{FigFiniteMenu} makes explicit how the discrete construction connects to the general convex characterization. In one direction, the separating menu induces a discrete convex function $\mathcal{G}(\priorNull)$, represented by the five dots marking truthful-reporting utilities. The slope of the supporting line segment through each dot coincides with the subgradient of $\mathcal{G}$ at that type, which determines the utility available to all other types from the same contract. Convexity ensures that each type maximizes utility by reporting truthfully. Conversely, starting from this discrete convex function---or from the piecewise convex envelope---it is possible to reconstruct the separating menu, establishing the correspondence between discrete and general constructions.

\begin{figure}[ht]
  \begin{center}
    \begin{tabular}{ccc}
    \widgraph{0.45\textwidth}{\figdir/DiscreteMenu_eps50_costMid.pdf}
      &&
      \widgraph{0.45\textwidth}{\figdir/DiscreteMenu_eps100_costMid.pdf} \\ (a) $\eps{t} =50$ && (b) $\eps{t} = 100$
    \\ \widgraph{0.45\textwidth}{\figdir/DiscreteMenu_eps50_costLb.pdf}
      &&
      \widgraph{0.45\textwidth}{\figdir/DiscreteMenu_eps50_costUb.pdf} \\ (c)
    $\cost_{t-1}$ close to the lower bound $\intleft_t$ && (d) $\cost_{t-1}$
      close to the upper bound $\intright_t$
    \end{tabular}
    \caption{Plots of maximum utility~\eqref{EqnOptInUtility} achieved
      by each type $\priorNull \in [0,1]$ under a separating menu
      constructed for types $\priorNull \in \{ 0.3, 0.4, 0.5, 0.6,
      0.7\}$. The dots mark the utilities of the designed types under
      truthful reporting, forming a discrete convex function
      $\Gfun$. (a)-(b) Slack parameter $\eps{t} = 50$ and $\eps{t} =
      100$, respectively, with $\cost_{t-1} = \frac{1}{2}[\intleft_t +
        \intright_t]$. (c)-(d) Cost $\cost_{t-1} \approx \intleft_t$
      and $\cost_{t-1} \approx \intright_t$, respectively, with
      $\eps{t} = 50$.}
    \label{FigFiniteMenu}
  \end{center}
\end{figure}

\Cref{FigFiniteMenu} further illustrates how different choices of $\{\eps{t}
\}_{t=2}^\typeNum$ and
cost~\eqref{eq:finite_menu_cost} affect the resulting menu. In panel (a), we set
$\eps{t} = 50$ for all $t \in \{2, \dots, 5\}$; in panel (b), we
increase this value to $\eps{t} = 100$. In both cases, we set
$\cost_{t-1}$ to the midpoint of the admissible interval $[\intleft_t,
  \intright_t]$ from equation~\eqref{eq:finite_menu_cost}.  As we increase $\eps{t}$ from $50$ to $100$, we observe a
more pronounced increase in the slope of the utility functions from
contract 5 to contract 1. Intuitively, $\eps{t}$ controls the separation between types: larger values increase convexity of $\Gfun$ and strengthen incentive compatibility.

In panel (c), we
set $\cost_{t-1}$ to be the lower bound $\intleft_{t-1}$ of the
interval plus a small perturbation, while in panel (d), we set
$\cost_{t-1}$ to be the upper bound $\intright_{t-1}$ minus a small
perturbation, with both fixing $\eps{t} = 50$ for all $t \in \{2, 3, 4, 5\}$.  Comparing panels (a), (c), and (d), we see that changing
the choice of costs shifts the vertical intercept of the utility function induced by contract $t$. This shift then changes the location of the utility switch points---the values of $\priorNull$ at which an agent is indifferent between two adjacent contracts, i.e., where their expected utilities intersect. When $\cost_{t-1}$ is set near the lower bound of the interval, the switch point lies closer to type $\npriorNull{t}$, making contract $t$ nearly interchangeable with contract $t-1$ for that type. When $\cost_{t-1}$ is near the upper bound, the switch point moves toward $\npriorNull{t}$’s neighbor on the other side, making contract $t$ nearly interchangeable with contract $t+1$. Thus, the cost parameter effectively governs how the population is partitioned across adjacent contracts, while still preserving incentive compatibility.


\section{Separating menus under fixed costs}
\label{SecAppFixedCostMenu}
In this section, we show that constructing a separating menu in which all contracts share the same cost requires the power function to satisfy a form of convexity. We illustrate this in the simplest nontrivial setting with two agent types $\npriorNull{1} < \npriorNull{2}$ and corresponding type-optimal thresholds $\threshold_1 > \threshold_2$.

Assume the base contract $(\threshold_2, \reward_2, \cost_2)$ is chosen so that the participation constraint~\eqref{eq:finite_menu_participate} holds for type $\npriorNull{2}$. Following the procedure~\eqref{eq:finite_menu_reward}–\eqref{eq:finite_menu_cost}, the contract for type $\npriorNull{1}$ has reward
\begin{align}
\label{eq:fixCost_Reward}
    \reward_1 = \reward_2 \frac{\Delta_2}{\Delta_{1}} + \epsilon_1 \qquad \mbox{for some} \quad \epsilon_1 > 0,
\end{align}
where $\Delta_t = \altAppSimple{\threshold_t} - \nullAppSimple{\threshold_t}$. Now suppose both contracts are constrained to have the same cost, i.e., $\cost_1 = \cost_2 = \cost$. For incentive compatibility, type $\npriorNull{2}$ must strictly prefer its designated contract over the alternative:
\begin{align*}
   \utilfunc(\npriorNull{2};\threshold_2, \reward_2, \cost) > \utilfunc(\npriorNull{2};\threshold_1, \reward_1, \cost).
\end{align*}
Substituting utility function~\eqref{EqnOptInUtility} and expression~\eqref{eq:fixCost_Reward} yields
\begin{align*}
    \reward_2\big[\npriorNull{2} \nullAppSimple{\threshold_2} + (1-\npriorNull{2}) \altAppSimple{\threshold_2}\big] -\cost > \big[\reward_2 \frac{\Delta_2}{\Delta_{1}} + \epsilon_1 \big] \big[\npriorNull{2} \nullAppSimple{\threshold_1} + (1-\npriorNull{2}) \altAppSimple{\threshold_1}\big] -\cost.
\end{align*}
Since $\epsilon_1 > 0$, this inequality implies that
\begin{align*}
    \Delta_{1} \big[\npriorNull{2} \nullAppSimple{\threshold_2} + (1-\npriorNull{2}) \altAppSimple{\threshold_2}\big] > \Delta_2 \big[\npriorNull{2} \nullAppSimple{\threshold_1} + (1-\npriorNull{2}) \altAppSimple{\threshold_1}\big],
\end{align*}
which reduces to the following after simplification:
\begin{align*}
\nullAppSimple{\threshold_2}\altAppSimple{\threshold_1} > \nullAppSimple{\threshold_1}\altAppSimple{\threshold_2}
\end{align*}
With $\nullAppSimple{\threshold} =\threshold$, we conclude that for this fixed-cost menu to be incentive-compatible, the power function must satisfy
\begin{align}
\label{eq:fixCost_ratio}
    \frac{\altAppSimple{\threshold_1}}{\threshold_1} > \frac{\altAppSimple{\threshold_2}}{\threshold_2} \quad \mbox{for} \quad \threshold_1 > \threshold_2.
\end{align}

Inequality~\eqref{eq:fixCost_ratio} requires the ratio
$\altAppSimple{\threshold}/\threshold$ to be strictly increasing in
$\threshold$. This property corresponds to the power function
$\altAppSimple{\threshold}$ being convex (or at least displaying
convex-like curvature) on $[0,1]$.  However, in most statistical
settings, the power function of a test is concave in $\threshold$:
marginal increases in the significance level yield diminishing
improvements in power. Under concavity, we have the opposite
inequality---namely, $\frac{\altAppSimple{\threshold_1}}{\threshold_1}
\leq \frac{\altAppSimple{\threshold_2}}{\threshold_2}$---which
directly contradicts~\eqref{eq:fixCost_ratio}.

Thus, a separating menu with constant costs can only exist if the
power function is convex in $\threshold$, a property rarely satisfied
by common hypothesis tests. This explains why the constant-cost case
is generally infeasible in practice, and motivates our focus on the
constant-reward setting as the more realistic and broadly applicable
case.


\section{Sensitivity to model misspecification}
\label{Sec:sensitivity}

We have shown how to design separating menus that achieve statistical
efficiency under the assumption that the principal knows the power
function $\altAppSimple{\threshold}$ precisely. In practice, however,
it is often difficult to characterize the power function exactly. What
then can be said about statistical efficiency in such cases? When
$\altAppSimple{\threshold}$ is entirely unknown, \citet{shi2024sharp}
(Corollary 2) provides a worst-case bound on the false discovery rate
induced by any statistical contract. This represents a pessimistic
benchmark, reflecting the principal’s complete agnosticism about the
power function.

More realistically, the principal may possess partial information,
such as a set of plausible power functions---an assumption that
naturally arises in simple-versus-composite testing problems. Such
partial knowledge can support a more refined assessment of achievable
statistical efficiency. In this section, we focus on a principal whose
objective is to maximize TDR while controlling FDR at a prescribed
level $\fdrLevel$. We analyze the robustness of separating menus
constructed under this criterion to misspecification of the power
function, and in particular, how deviations from the assumed power
function may lead to FDR violations for certain agent types.

\paragraph{FDR gap.} 
Suppose the principal constructs a separating menu using a power
function $\altAppSimple{\threshold}$, which determines the contract
parameters $(\threshold_\reportNull, \reward_\reportNull,
\cost_\reportNull)$. The type-optimal thresholds
$\threshold_\reportNull$ are chosen according
to~\eqref{EqnFDRThreshold}. However, an agent of type $\priorNull$ may
hold beliefs aligned with a different power function
$\misspecified{\threshold}$.  Recall that under the assumed power
function $\altAppSimple{\threshold}$, the menu is designed to control
FDR for all reported types $\reportNull$:
\begin{align*}
 \frac{\reportNull \nullAppSimple{\threshold_\reportNull}
 }{\reportNull \nullAppSimple{\threshold_\reportNull} +
   (1-\reportNull)\altAppSimple{\threshold_\reportNull}} =
 \frac{\nullAppSimple{\threshold_\reportNull} }{
   \nullAppSimple{\threshold_\reportNull} +
   \frac{(1-\reportNull)}{\reportNull}
   \altAppSimple{\threshold_\reportNull} }\leq \fdrLevel.
\end{align*}
In contrast, if the true power function is $\misspecified{\threshold}$, the FDR incurred by an agent of type $\priorNull$ who reports $\reportNull$ becomes
\begin{align*}
 \frac{\priorNull \nullAppSimple{\threshold_\reportNull} }{\priorNull
   \nullAppSimple{\threshold_\reportNull} +
   (1-\priorNull)\misspecified{\threshold_\reportNull}} =
 \frac{\nullAppSimple{\threshold_\reportNull} }{
   \nullAppSimple{\threshold_\reportNull} +
   \frac{(1-\priorNull)}{\priorNull}
   \misspecified{\threshold_\reportNull}}.
\end{align*}
Hence, the sensitivity of FDR control to misspecification depends on
the difference between the denominators in these two expressions. We
define this difference as the \emph{FDR Gap}:
\begin{align}
\label{def:fdr_gap}
 \fdrGap \defn \frac{(1-\reportNull)}{\reportNull}
 \altAppSimple{\threshold_\reportNull} -
 \frac{(1-\priorNull)}{\priorNull}
 \misspecified{\threshold_\reportNull}.
\end{align}
If $\fdrGap \leq 0$, then even under misspecification, the FDR
constraint continues to hold. If $\fdrGap > 0$, the constraint is
violated, with the magnitude of the gap quantifying the degree of
violation. \\

To evaluate the FDR gap~\eqref{def:fdr_gap}, the principal must
determine which contract an agent of type $\priorNull$ selects. In
general, obtaining a closed-form expression for the selection
function~\eqref{EqnDefnSelFun} under misspecification is
challenging. To gain tractable insight, we analyze the special case of
a fixed-reward separating menu constructed from $\Gfun$
in~\eqref{eq:constR_G}. This setting allows us to characterize the
factors that drive FDR violations. Under such a fixed-reward menu, if
an agent with misspecified power function $\misspecified{\threshold}$
reports type $\reportNull$, the resulting FDR gap is
\begin{align}
\label{eq:fdr_gap_constR}
    \fdrGap = \frac{(1-\reportNull)}{\reportNull}
    \altAppSimple{\threshold_\reportNull} -
    \frac{1-\reportNull}{\reportNull \big[1 + \frac{\tilde
          \beta_1^\prime(\threshold_\reportNull) -
          \beta_1^\prime(\threshold_\reportNull)}{\reportNull \big(
          \beta_1^\prime(\threshold_\reportNull) - 1\big)} \big]}
    \misspecified{\threshold_\reportNull}.
\end{align}
See \Cref{SecProofFDRGap} for a proof. \Cref{eq:fdr_gap_constR}
highlights two main drivers of FDR sensitivity: (a) the difference in
local slopes of the power functions, i.e., $\tilde
\beta_1^\prime(\threshold_\reportNull) -
\beta_1^\prime(\threshold_\reportNull)$, and (b) the value of
$\misspecified{\threshold_\reportNull}$ itself.  If the principal can
restrict attention to a class of plausible power functions---such as
in simple-versus-composite hypothesis testing---this characterization
provides a concrete way to bound the extent of FDR violations arising
from misspecification.

To obtain a concrete illustration, we examine the sensitivity of FDR
control to misspecification in the setting of Gaussian mean
testing. Following the setup in~\Cref{secMain:opt_menu_plots}, we fix
the reward at $100$ across all contracts and impose an FDR constraint
of $0.25$. The separating menu is constructed under the assumption of
a binary test with null hypothesis $\theta_0 = 0$ and alternative
$\theta_1 = 1$. We consider $p$-value thresholds in the interval
$[0.001,0.31]$, which corresponds to a separating menu for agents with
$q \in [0.43,0.86]$.
\begin{figure}[ht]
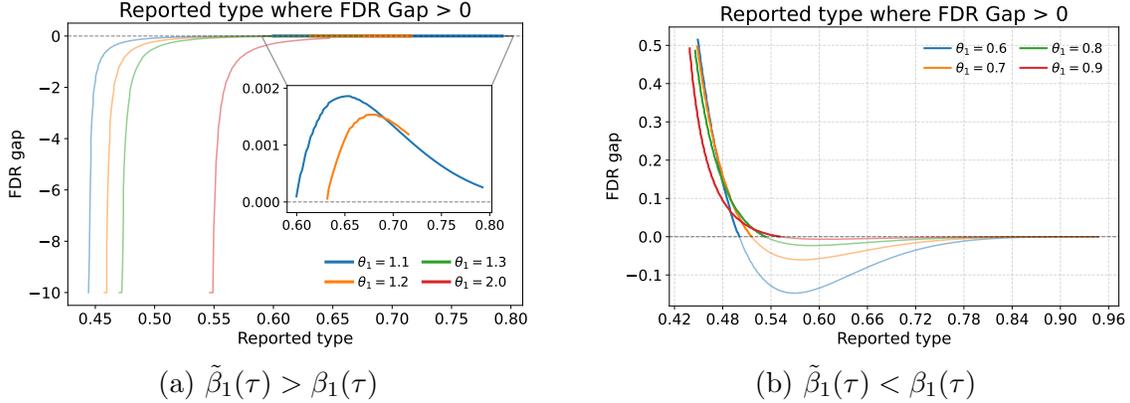

  \begin{center}
    \begin{tabular}{ccc}
      \widgraph{0.43\textwidth}{\figdir/Misspecified_overpower} &&
      \widgraph{0.43\textwidth}{\figdir/Misspecified_underpower}
      \\
      (a) $\misspecified{\threshold} > \altAppSimple{\threshold}$ &&
      (b) $\misspecified{\threshold} < \altAppSimple{\threshold}$
    \end{tabular}
    \caption{Plots of FDR gap~\eqref{eq:fdr_gap_constR} as a function
      of the reported type $\reportNull$, using power function
      $\misspecified{\threshold}$ under misspecified alternative
      hypotheses $\theta_1 \neq 1$. (a) $\theta_1 \in \{1.1, 1.2, 1.3,
      2\}$. (b) $\theta_1 \in \{0.6, 0.7, 0.8, 0.9\}$.}
    \label{FigMisPower}
  \end{center}
\end{figure}
In~\Cref{FigMisPower}, we plot the FDR gap~\eqref{eq:fdr_gap_constR}
for various misspecified power functions. Panel (a) shows the case
where $\misspecified{\threshold} > \altAppSimple{\threshold}$ with
$\theta_1 \in \{1.1, 1.2, 1.3, 2\}$, while panel (b) considers
$\misspecified{\threshold} < \altAppSimple{\threshold}$ with $\theta_1
\in \{0.6, 0.7, 0.8, 0.9\}$.

The results reveal a clear asymmetry. When the misspecified power
function is overpowered, FDR violations arise primarily for contracts
intended for agents with higher prior null. As shown in
~\Cref{FigMisPower}(a), for contracts associated with report types
$\reportNull < 0.6$, the FDR gap is negative and hence no violation
occurs. A narrow band of mild violations emerges for $0.6 <
\reportNull < 0.8$: for agents facing $\theta_1 = 1.1$ or $\theta_1 =
1.2$, the assigned thresholds $\threshold_\reportNull$ can yield FDR
violations, but the gap remains very small (upper bounded by $0.002$).
By contrast, when the power function is underpowered, violations
concentrate among contracts designed for agents with lower prior
null. In~\Cref{FigMisPower}(b), the FDR gap is positive mainly for
$\reportNull < 0.55$. In summary, the FDR violation in this instance
is negligible when the true $\theta_1$ is larger than the one used to
construct the menu, whereas it can be large in the opposite case. This
suggests that, in order to be robust, one should construct the menu
with respect to the weakest plausible alternative---that is, the one
yielding the lowest power.

\section{Proofs}
\label{SecProofs}

In this section, we collect the proofs of our results. 


\subsection{Proof of \texorpdfstring{\Cref{EqnOptInUtility}}{Equation (2c)}}
\label{SecProofOptInUtility}

For notational simplicity, we introduce the shorthand $W \defn
\WealthAfter{\reportNull}$, and note that $\utilfunc(\priorNull,
\reportNull) = \Exs[W]$ by definition.  A simple calculation yields
\begin{subequations}
\begin{align}
\Exs[W] & = \priorNull \Exs[W \mid \theta \in \Theta_0] + (1 -
\priorNull) \Exs[W \mid \theta \in \Theta_1] \notag \\
\label{EqnInitialDecomp}  
& = \priorNull \Big \{ \Exs[W \mid \theta \in \Theta_0] - \Exs[W \mid
  \theta \in \Theta_1] \Big \} + \Exs[W \mid \theta \in \Theta_1].
\end{align}
Given the definition~\eqref{eq:null_alt_app} of the type I error
$\nullAppSimple{\threshold}$, we can write
\begin{align}
\Exs[W \mid \theta \in \Theta_0] & =
\nullAppSimple{\threshold_\reportNull} \Exs[W \mid \theta \in
  \Theta_0, \evidence \leq \threshold_\reportNull] + \big(1 -
\nullAppSimple{\threshold_\reportNull} \big) \; \Exs[W \mid \theta \in
  \Theta_0, \evidence > \threshold_\reportNull] \nonumber \\
& = \nullAppSimple{\threshold_\reportNull} \big[\reward_\reportNull -
  \cost_\reportNull \big] - \big(1 -
\nullAppSimple{\threshold_\reportNull} \big) \cost_\reportNull
\nonumber \\
 \label{eq:null_expectation}    
& = \nullAppSimple{\threshold_\reportNull} \reward_\reportNull -
 \cost_\reportNull.
\end{align}
Similarly, the conditional expectation $\Exs[W \mid \theta \in
  \Theta_1]$ can be expressed as
\begin{align}
\Exs[W \mid \theta \in \Theta_1] & =
\altAppSimple{\threshold_\reportNull} \Exs[W \mid \theta \in \Theta_1,
  \evidence \leq \threshold_\reportNull] + \big(1 -
\altAppSimple{\threshold_\reportNull} \big) \; \Exs[W \mid \theta \in
  \Theta_1, \evidence > \threshold_\reportNull] \nonumber \\
 \label{eq:nonnull_expectation}
& = \altAppSimple{\threshold_\reportNull} \reward_\reportNull -
\cost_\reportNull.
\end{align}
\end{subequations}
Combining equations~\eqref{eq:null_expectation}
and~\eqref{eq:nonnull_expectation} with the initial
decomposition~\eqref{EqnInitialDecomp} yields
\begin{align*}
\utilfunc(\priorNull, \reportNull) = \E[W] = \priorNull
\reward_\reportNull \big[ \nullAppSimple{\threshold_\reportNull} -
  \altAppSimple{\threshold_\reportNull} \big] + \big[
  \reward_\reportNull \altAppSimple{\threshold_\reportNull} -
  \cost_\reportNull \big],
\end{align*}
as claimed.


\subsection{Proof of monotonicity of \texorpdfstring{\Cref{EqnFDRThreshold}}{Equation (6)}}
\label{SecProofLemMonotone}

In this proof, we assume that $\threshold_\priorNull > 0$
as this is the only case relevant to menu construction.  Recall that
an agent of type $\priorNull$ who selects a $p$-value threshold
$\threshold$ incurs the FDR given by
\begin{align*}
\psi(\priorNull, \threshold) \defn \frac{\priorNull
  \nullAppSimple{\threshold}}{\priorNull \nullAppSimple{\threshold} +
  (1-\priorNull) \altAppSimple{\threshold}}.
\end{align*}
To establish that the optimal threshold function
$\threshold_{\priorNull}$ is decreasing in the agent type
$\priorNull$, it suffices to show that for any pair $\priorNull <
\tilde \priorNull$, we have $\threshold_{\priorNull} \geq
\threshold_{\tilde \priorNull}$, where
\begin{align*}
 \threshold_{\priorNull} = \sup \big\{\threshold \, \mid \,
 \psi(\priorNull, \threshold) \leq \fdrLevel \big\}.
\end{align*}
For agent type $\tilde \priorNull$, the optimal threshold
$\threshold_{\tilde \priorNull}$ satisfies that $\psi(\tilde
\priorNull, \threshold_{\tilde \priorNull}) \leq \fdrLevel$.  The key step, which we show later, is that the function $\psi(\priorNull, \threshold)$ is strictly
increasing in $\priorNull$: 
\begin{align}
\label{eq:fdr_increase_type}
    \psi(\priorNull, \threshold) < \psi(\tilde \priorNull, \threshold) \quad \mbox{for any $\threshold > 0$ and
      pair $\priorNull < \tilde \priorNull$.}
\end{align}
This claim implies that $\psi(\priorNull, \threshold_{\tilde
  \priorNull}) < \psi(\tilde \priorNull, \threshold_{\tilde
  \priorNull}) \leq \fdrLevel$. By the definition of
$\threshold_\priorNull$ as the largest threshold satisfying the FDR
constraint, we must have $\threshold_{\priorNull} \geq
\threshold_{\tilde \priorNull}$.

\paragraph{Strict monotonicity:} When the power function $\altAppSimple{\threshold}$ is concave and differentiable, the function $\psi(\priorNull, \threshold)$ is also strictly increasing in $\threshold$, a fact we will prove later:
\begin{align}
\label{eq:fdr_increase_threshold}
    \psi(\priorNull, \threshold) < \psi( \priorNull, \tilde \threshold) \quad \mbox{for any $\priorNull > 0$ and
      pair $0 < \threshold < \tilde \threshold$.}
\end{align}
Since $\psi(\priorNull, \threshold_{\tilde \priorNull}) < \psi(\tilde \priorNull, \threshold_{\tilde \priorNull}) \leq \fdrLevel$, inequality~\eqref{eq:fdr_increase_threshold} implies that $\threshold_{\priorNull} > \threshold_{\tilde \priorNull}$. \\

\noindent We now present the proofs of the two auxiliary claims stated earlier.

\paragraph{Proof of claim\texorpdfstring{~\eqref{eq:fdr_increase_type}}{(A28)}:}

Taking the derivative of $\psi(\priorNull, \threshold)$ with respect to $\priorNull$ yields
\begin{align*}
\frac{d}{d\priorNull}\psi(\priorNull, \threshold) &=
\frac{\big[\priorNull \nullAppSimple{\threshold} + (1-\priorNull)
    \altAppSimple{\threshold}\big] \nullAppSimple{\threshold} -
  \priorNull \nullAppSimple{\threshold}\big[
    \nullAppSimple{\threshold} -
    \altAppSimple{\threshold}\big]}{\big[\priorNull
    \nullAppSimple{\threshold} + (1-\priorNull)
    \altAppSimple{\threshold}\big]^2} \\
& = \frac{\nullAppSimple{\threshold} \altAppSimple{\threshold}}{
  \big[\priorNull \nullAppSimple{\threshold} + (1-\priorNull)
    \altAppSimple{\threshold} \big]^2},
\end{align*}
which is strictly positive for $\threshold > 0$. Thus, $\psi$ is strictly increasing in $\priorNull$ for any fixed $\threshold > 0$ and we have established claim~\eqref{eq:fdr_increase_type}. 

\paragraph{Proof of claim\texorpdfstring{~\eqref{eq:fdr_increase_threshold}}{(A29)}:}

Differentiating the function $\psi(\priorNull, \threshold)$ with
respect to $\threshold$ yields
\begin{align*}
\frac{d}{d\threshold}\psi(\priorNull, \threshold) &=
\frac{\big[\priorNull \nullAppSimple{\threshold} + (1-\priorNull)
    \altAppSimple{\threshold}\big] \priorNull
  \beta_0^\prime(\threshold) - \priorNull
  \nullAppSimple{\threshold}\big[ \beta_0^\prime(\threshold) -
    \beta_1^\prime(\threshold)\big]}{\big[\priorNull
    \nullAppSimple{\threshold} + (1-\priorNull)
    \altAppSimple{\threshold}\big]^2} \\
& = \frac{\priorNull (1 - \priorNull) \big[ \beta_0^\prime(\threshold)
    \altAppSimple{\threshold} - \nullAppSimple{\threshold}
    \beta_1^\prime(\threshold)\big]}{\big[\priorNull
    \nullAppSimple{\threshold} + (1 - \priorNull)
    \altAppSimple{\threshold} \big]^2}.
\end{align*}
Since $\nullAppSimple{\threshold} = \threshold$, we have $\beta_0^\prime(\threshold) = 1$. Assumption~\eqref{eq:nontrivial_power} implies that $\altAppSimple{\threshold}$ is not linear function on $[0,1]$. The concavity of $\altAppSimple{\threshold}$ then implies that 
\begin{align*}
    \beta_1^\prime(\threshold) < \frac{\altAppSimple{\threshold} - \altAppSimple{0}}{\threshold} \quad \mbox{for all}\quad \threshold > 0,
\end{align*}
which is equivalent to \mbox{$\altAppSimple{\threshold} - \threshold
  \beta_1^\prime(\threshold) > 0$.}  Thus,
$\frac{d}{d\threshold}\psi(\priorNull, \threshold)$ is strictly
positive for $\threshold > 0$ and
claim~\eqref{eq:fdr_increase_threshold} follows.


\subsection{Proof of \texorpdfstring{\Cref{th:psr_menu}}{Theorem 1}}
\label{proof:psr_menu}

We split the proof into two parts. The first part establishes the
forward direction---namely, that any contract menu constructed
according to the specified procedure is separating---while the second
part proves the converse.

\subsubsection{Forward direction}

From the specification~\eqref{eq:psr_reward_cost} of the reward
$\reward_\reportNull$, 
the subgradient $\gstar{\reportNull}$ of the function $\Gfun$ satisfies
\begin{align}
\label{EqnMammoth}  
\gstar{\reportNull} = \reward_\reportNull \big [
  \nullAppSimple{\threshold_\reportNull} -
  \altAppSimple{\threshold_\reportNull} \big ].
\end{align}
The non-trivial power condition~\eqref{eq:nontrivial_power} combined
with condition~\eqref{eq:Gfun_ic} on $\Gfun$ ensure that
$\reward_\reportNull > 0$ for all $\reportNull \in
\supp(\typeDist)$.

Substituting the subgradient~\eqref{EqnMammoth} into expression~\eqref{eq:psr_reward_cost} for the cost
$\cost_\reportNull$ and rearranging yields 
\begin{align*}
\Gfun(\reportNull) & = \reward_\reportNull \Big[\reportNull
  \nullAppSimple{\threshold_\reportNull} + (1 - \reportNull)
  \altAppSimple{\threshold_\reportNull} \Big] - \cost_\reportNull.
\end{align*}
Thus, we see that the function value $\Gfun(\priorNull)$ is exactly
the utility~\eqref{EqnOptInUtility} of agent type $\priorNull$ under
truthful reporting.

In order to establish that the constructed menu is separating, we need
to show that
\begin{subequations}
\begin{align}
    \utilfunc(\priorNull; \priorNull) &> \utilfunc(\priorNull; \reportNull)
    && \mbox{for all} \quad \reportNull, \priorNull \in
    \supp(\typeDist) \; \mbox{and} \; \reportNull \neq
    \priorNull, \label{eq:thm1_proof_ic}\\ \utilfunc(\priorNull; \priorNull) &\geq 0 && \mbox{for all} \quad \priorNull \in
    \supp(\typeDist). \label{eq:thm1_proof_participation}
\end{align}
\end{subequations}
Conditions~\eqref{eq:thm1_proof_ic}
and~\eqref{eq:thm1_proof_participation} ensure that the menu is incentive-compatible and satisfies participation
constraints, respectively.

\paragraph{Proof of condition~\eqref{eq:thm1_proof_ic}.}

By property~\eqref{eq:Gfun_ic}, for all $\priorNull \in \supp(\typeDist)$, we have 
\begin{align*}
\Gfun(\priorNull) > \Gfun( \reportNull) + \gstar{\reportNull}(\priorNull - \reportNull) \qquad \mbox{for all
  $\reportNull \in \supp(\typeDist) \backslash \{\priorNull\}$.}
\end{align*}
Substituting expression~\eqref{EqnMammoth} and the relation
$\Gfun(\priorNull) = \utilfunc(\priorNull; \priorNull)$ yields
\begin{align*}
\utilfunc(\priorNull; \priorNull) > \utilfunc(\reportNull; \reportNull) +
\reward_{ \reportNull} \big[\nullAppSimple{\threshold_{ \reportNull}}
  - \altAppSimple{\threshold_{\reportNull}}\big] (\priorNull -
\reportNull) \qquad \mbox{for all $\reportNull \in \supp(\typeDist)
  \backslash \{\priorNull\}$.}
\end{align*}
Plugging in definition~\eqref{EqnOptInUtility} for
$\utilfunc(\reportNull; \reportNull)$ and rearranging, we see that the
right-hand side of this expression is equivalent to
\begin{align}
\label{eq:ic_equiv}
    \utilfunc(\reportNull; \reportNull) + \reward_{ \reportNull}
    \big[\nullAppSimple{\threshold_{ \reportNull}} -
      \altAppSimple{\threshold_{\reportNull}}\big] (\priorNull -
    \reportNull) = \reward_{\reportNull} \big[\priorNull
      \nullAppSimple{\threshold_{\reportNull}} + (1 - \priorNull)
      \altAppSimple{\threshold_{\reportNull}}\big] -
    \cost_{\reportNull} = \utilfunc(\priorNull; \reportNull).
\end{align}
Combining the pieces, we conclude that $\utilfunc(\priorNull;
\priorNull) > \utilfunc(\priorNull; \reportNull)$.



\paragraph{Proof of condition~\eqref{eq:thm1_proof_participation}.}
By property~\eqref{eq:Gfun_participation},
we are guaranteed to have
\begin{align*}
\Gfun(\worstPriorNull) = \utilfunc(\worstPriorNull; \worstPriorNull) \geq
0 \qquad \mbox{where $\worstPriorNull = \sup \{q \mid q \in
  \supp(\typeDist)\}$.}
\end{align*}
Therefore, condition~\eqref{eq:thm1_proof_participation} holds for
$\worstPriorNull$. Condition~\eqref{eq:thm1_proof_ic} implies that
$\Gfun(\priorNull) = \utilfunc(\priorNull; \priorNull) >
\utilfunc(\priorNull;\worstPriorNull)$ for all $\priorNull <
\worstPriorNull$. Based on the discussion
around~\Cref{EqnOptInUtility}, the function $\priorNull \mapsto
\utilfunc(\priorNull; \reportNull)$ is strictly decreasing, which
implies that $\utilfunc(\priorNull; \worstPriorNull) >
\utilfunc(\worstPriorNull; \worstPriorNull)$ \mbox{for all $\priorNull
  < \worstPriorNull$.}  Chaining these inequalities yields that
$\Gfun(\priorNull) > \Gfun(\worstPriorNull) \geq 0$ for all
$\priorNull < \worstPriorNull$.

\subsubsection{The converse}

Consider a separating menu $(\threshold_\reportNull,
\reward_\reportNull, \cost_\reportNull)$ indexed by agent types
$\priorNull \in \supp(\typeDist)$.  We now show that it can be
associated with a function $\Gfun$ that satisfies
conditions~\eqref{eq:Gfun_ic} and~\eqref{eq:Gfun_participation}.  In
particular, recalling the utility function~\eqref{EqnOptInUtility}, we
define $\Gfun(\priorNull) \defn \utilfunc(\priorNull;\priorNull)$,
which is the utility attained by each agent type under truthful
reporting. Moreover, we define the scalars
\begin{align}
\label{EqnGstarScalar}
\gstar{\priorNull} \defn \reward_\priorNull
\big[\nullAppSimple{\threshold_\priorNull} -
  \altAppSimple{\threshold_\priorNull}\big] \quad \mbox{for each
  $\priorNull \in \supp(\typeDist)$.}
\end{align}

\paragraph{Proof of property~\eqref{eq:Gfun_ic}.} Since the menu is
incentive-compatible, by condition~\eqref{eq:thm1_proof_ic}, we know
\begin{align*}
  \utilfunc(\priorNull; \priorNull) > \utilfunc(\priorNull; \reportNull)
  \quad \mbox{for all} \quad \reportNull, \priorNull \in
  \supp(\typeDist) \; \mbox{and} \; \reportNull \neq \priorNull.
\end{align*}
The equivalent expression in equation~\eqref{eq:ic_equiv} implies that
\begin{align*}
  \utilfunc(\priorNull; \priorNull) > \utilfunc(\reportNull; \reportNull)
  + \reward_{ \reportNull} \big[\nullAppSimple{\threshold_{
        \reportNull}} - \altAppSimple{\threshold_{\reportNull}}\big]
  (\priorNull - \reportNull).
\end{align*}
Equivalently, using the definition~\eqref{EqnGstarScalar}, we have
\begin{align*}
  \Gfun(\priorNull) > \Gfun(\reportNull) +
  \gstar{\reportNull}(\priorNull - \reportNull) \quad \mbox{for all}
  \quad \priorNull \in \supp(\typeDist).
\end{align*}
Moreover, definition~\eqref{EqnGstarScalar} combined with the
non-trivial power assumption~\eqref{eq:nontrivial_power} implies that
$\gstar{\reportNull} < 0$. Therefore, we conclude that
property~\eqref{eq:Gfun_ic} holds.

\paragraph{Proof of property~\eqref{eq:Gfun_participation}.}
Since the menu satisfies the participation
constraint~\eqref{eq:thm1_proof_participation}, we have
$\Gfun(\priorNull) =\utilfunc(\priorNull; \priorNull) \geq 0$ for all
$\priorNull \in
\supp(\typeDist)$. Property~\eqref{eq:Gfun_participation} thus holds
immediately.


\subsection{Proof of
\texorpdfstring{\Cref{co:Gfun_varyingR}}{Corollary 2}}
\label{proof:Gfun_varyingR}

In order to show that the menu is separating using~\Cref{th:psr_menu},
it suffices to show that the function $\Gfun$ from
equation~\eqref{eq:varyingR_G} satisfies properties~\eqref{eq:Gfun_ic}
and~\eqref{eq:Gfun_participation}.  By the non-trivial power
assumption~\eqref{eq:nontrivial_power}, we have
$\nullAppSimple{\threshold} - \altAppSimple{\threshold} < 0$.  From
the definition~\eqref{eq:varyingR_G}, we see that $\Gfun$ is
differentiable with derivative
\begin{align*}
  \underbrace{\Gfun'(\priorNull)}_{\equiv \gstar{\priorNull}} =
  \reward_{\worstPriorNull}\big[
    \nullAppSimple{\threshold_{\worstPriorNull}} -
    \altAppSimple{\threshold_{\worstPriorNull}}\big] (1 +
  \eps{\priorNull}).
\end{align*}
Since $\eps{\priorNull} \geq 0$, it follows that $\gstar{\priorNull} <
0$.  Moreover, because the mapping $\priorNull \mapsto
\eps{\priorNull}$ is differentiable, the second derivative of $\Gfun$
is given by
\begin{align*}
\Gfun''(\priorNull) = \frac{d}{d\priorNull} \Gfun'(\priorNull) =
\reward_{\worstPriorNull}\big[
  \nullAppSimple{\threshold_{\worstPriorNull}} -
  \altAppSimple{\threshold_{\worstPriorNull}}\big]
\specfunc^\prime(\priorNull).
\end{align*}
Since $\specfunc^\prime(\priorNull) < 0$, we have $\Gfun''(\priorNull)
> 0$, implying that $\Gfun$ is strictly convex on $[0,
  \worstPriorNull]$. Thus, we conclude that $\Gfun$ satisfies
property~\eqref{eq:Gfun_ic}. Finally, by
assumption~\eqref{eq:worst_participation}, we have
$\Gfun(\worstPriorNull) = 0$, verifying the second
condition~\eqref{eq:Gfun_participation}.

Next, we argue that the principal can make the screening
cost~\eqref{EqnDefnSeparation} arbitrarily small by adjusting the
values $\eps{z}$. It suffices to verify this for a single agent type
$\priorNull$, whose screening cost under~\eqref{eq:varyingR_G} is
given by
\begin{align*}
  \Bigg\{
  \reward_{\worstPriorNull}\big[
    \altAppSimple{\threshold_{\worstPriorNull}} -
    \nullAppSimple{\threshold_{\worstPriorNull}}\big]
  \int_{\priorNull}^{\worstPriorNull} \big[1 + \eps{z} \big]
  dz\Bigg\} - \Bigg \{ \priorNull \reward_{\worstPriorNull}
  \big[\nullAppSimple{\threshold_{\worstPriorNull}} -
    \altAppSimple{\threshold_{\worstPriorNull}}\big] +
  \big[\reward_{\worstPriorNull}\altAppSimple{\threshold_{\worstPriorNull}}
    - \cost_{\worstPriorNull} \big] \Bigg\}.
\end{align*}
Rearranging terms, we obtain
\begin{align*}
    -\bigg\{\reward_{\worstPriorNull}
    \big[\nullAppSimple{\threshold_{\worstPriorNull}} -
      \altAppSimple{\threshold_{\worstPriorNull}}\big] \bigg(
    \worstPriorNull + \int_{\priorNull}^{\worstPriorNull} \eps{z} dz
    \bigg) +
    \big[\reward_{\worstPriorNull}\altAppSimple{\threshold_{\worstPriorNull}}
      - \cost_{\worstPriorNull} \big] \bigg\}.
\end{align*}
As $\eps{z} \rightarrow 0^+$, the integral
$\int_{\priorNull}^{\worstPriorNull} \eps{z} dz$ converges to
zero. Consequently, the screening cost can be made arbitrarily close
to $-\utilfunc(\worstPriorNull; \threshold_{\worstPriorNull},
\reward_{\worstPriorNull}, \cost_{\worstPriorNull})$.  This proves the
claim, since $\utilfunc(\worstPriorNull; \threshold_{\worstPriorNull},
\reward_{\worstPriorNull}, \cost_{\worstPriorNull}) = 0$ by
assumption~\eqref{eq:worst_participation}.


\subsection{Proof of \texorpdfstring{\Cref{co:Gfun_constR}}{Corollary 3}}
\label{proof:Gfun_constR}

In order to show that the menu is separating using~\Cref{th:psr_menu},
it suffices to show that the function $\Gfun$ from
equation~\eqref{eq:constR_G} satisfies properties~\eqref{eq:Gfun_ic}
and~\eqref{eq:Gfun_participation}.  From its
definition~\eqref{eq:constR_G}, the function $\Gfun$ is differentiable
with derivative $\Gfun'(\priorNull) = \reward \big[
  \nullAppSimple{\threshold_\priorNull} -
  \altAppSimple{\threshold_\priorNull} \big]$.  By the non-trivial
power assumption~\eqref{eq:nontrivial_power} and the fact that
$\reward > 0$, it follows that $\Gfun'(\priorNull) \equiv
\gstar{\priorNull} < 0$.

Since $\altAppSimple{\threshold}$ is concave and differentiable, the
second derivative of $\Gfun$ is given by
\begin{align*}
\Gfun''(\priorNull) = \reward \big[
  \beta_0^\prime(\threshold_\priorNull) \threshold^\prime_\priorNull -
  \beta_1^\prime(\threshold_\priorNull) \threshold^\prime_\priorNull
  \big] \stackrel{(i)}{=} \reward \big[ 1 -
  \beta_1^\prime(\threshold_\priorNull) \big]
\threshold^\prime_\priorNull,
\end{align*}
where step (i) uses the identity $\nullAppSimple{\threshold} =
\threshold$ and the fact that $\beta_0^\prime(\threshold) = 1$.  By
assumption~\eqref{EqnStrictMonotone}, we have
$\threshold^\prime_\priorNull < 0$. Combined with
condition~\eqref{EqnPowerCondB}, this implies that
$\Gfun''(\priorNull)> 0$ and thus $\Gfun$ is strictly convex on
$[\bestPriorNull, \worstPriorNull]$. It follows that
property~\eqref{eq:Gfun_ic} holds.  Finally,
assumption~\eqref{eq:worst_participation} implies that
$\Gfun(\worstPriorNull) = 0$, verifying
property~\eqref{eq:Gfun_participation}.

Because $\Gfun$ is differentiable and its derivative is specified everywhere, the function is determined up to an additive constant. The condition $\Gfun(\worstPriorNull)=0$ removes this indeterminacy, yielding a unique function $\Gfun$. Through the correspondence in \Cref{th:psr_menu}, we conclude that the separating menu under fixed reward is uniquely determined.


\subsection{Proof of \texorpdfstring{\Cref{eq:fdr_gap_constR}}{Equation (A26)}}
\label{SecProofFDRGap}

Under a fixed-reward separating menu, the selection function for an agent of type $\priorNull$ with misspecified power function $\tilde\beta_1(\cdot)$ is
\begin{align}
 \label{eq:misspecified_report}  
 \SelFun(\priorNull) & = \arg \max_{\reportNull \in \reportSupp}
 \utilfunc(\priorNull; \reportNull) \; = \; \arg \max_{\reportNull
   \in \reportSupp} \reward \big[\priorNull
   \nullAppSimple{\threshold_\reportNull} + (1-\priorNull)
   \misspecified{\threshold_\reportNull} \big] - \cost_\reportNull,
\end{align}
where $\cost_\reportNull$ is given by~\eqref{eq:constR_cost}.
Differentiating the objective in~\eqref{eq:misspecified_report} with respect to $\reportNull$
and setting the derivative to zero yields
\begin{align*}
  \reward \threshold^\prime_{\reportNull} \big[\priorNull
      \beta_0^\prime(\threshold_\reportNull) + (1-\priorNull) \tilde
      \beta_1^\prime(\threshold_\reportNull) \big] = \reward
    \threshold^\prime_{\reportNull} \big[\reportNull
      \beta_0^\prime(\threshold_\reportNull) + (1-\reportNull)
      \beta_1^\prime(\threshold_\reportNull) \big].
\end{align*}
Cancelling $\reward \threshold^\prime_{\reportNull}$ from both sides and rearranging gives the condition that the reported type $\reportNull$ must satisfy:
\begin{align*}
 \priorNull \beta_0^\prime(\threshold_\reportNull) + (1-\priorNull)
 \tilde \beta_1^\prime(\threshold_\reportNull) = \reportNull
 \beta_0^\prime(\threshold_\reportNull) + (1-\reportNull)
 \beta_1^\prime(\threshold_\reportNull).
\end{align*}
Solving for $\priorNull$ in terms of $\reportNull$ gives $\priorNull
= \frac{\reportNull \beta_0^\prime(\threshold_\reportNull) +
  (1-\reportNull) \beta_1^\prime(\threshold_\reportNull) - \tilde
  \beta_1^\prime(\threshold_\reportNull)}{\beta_0^\prime(\threshold_\reportNull)
  - \beta_1^\prime(\threshold_\reportNull)}$.  Finally, substituting
this expression into the definition of the FDR
gap~\eqref{def:fdr_gap}, and using the fact that
$\beta_0^\prime(\threshold_\reportNull) = 1$, establishes
claim~\eqref{eq:fdr_gap_constR}.


\subsection{Proof of\texorpdfstring{~\Cref{prop:finite_menu}}{ Proposition 1}}
\label{proof:finite_menu}

Here we analyze the iterative procedure for constructing a separating
contract menu for finitely many agent types, as stated
in~\Cref{prop:finite_menu}.

\subsubsection{Main argument}

The recursive structure of
the menu construction ensures that local incentive compatibility between adjacent
types suffices to guarantee global truth-telling across all types.
Accordingly, the bulk of our effort is devoted to analyzing incentive
compatibility between adjacent types.  In particular, we begin by
stating two auxiliary lemmas that play a key role in the proof.

Our first lemma establishes conditions on rewards and costs that
ensure incentive compatibility between a pair of adjacent types.
Concretely, consider two adjacent agent types $\npriorNull{t}$ and
$\npriorNull{t-1}$ with $\npriorNull{t-1} < \npriorNull{t}$, and
 type-optimal thresholds $\threshold_t$ and
$\threshold_{t-1}$.  \Cref{lem:local_incentive_compatible}
specifies conditions on the reward-cost pairs that ensure local
incentive compatibility of the contracts $(\threshold_{t}, \reward_t,
\cost_t)$ and $(\threshold_{t-1}, \reward_{t-1}, \cost_{t-1})$.

\begin{lemma}
  \label{lem:local_incentive_compatible}
  The recursive construction~\eqref{eq:finite_menu_reward} and~\eqref{eq:finite_menu_cost}
  of reward-cost pairs ensures local incentive compatibility, meaning that
\begin{subequations}
\begin{align}
\label{eq:ic_for_good}
\utilfunc(\type{t-1}; \threshold_{t-1}, \reward_{t-1}, \cost_{t-1}) -
\utilfunc(\type{t-1}; \threshold_t, \reward_t, \cost_t) > 0, \\
\label{eq:ic_for_bad}    
\utilfunc(\type{t}; \threshold_t, \reward_t, \cost_t) -
\utilfunc(\type{t}; \threshold_{t-1}, \reward_{t-1}, \cost_{t-1}) > 0.
\end{align}
\end{subequations}
\end{lemma}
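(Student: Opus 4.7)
The plan is to exploit the linearity of the utility function in the prior null probability and reduce the local incentive compatibility conditions to a sign analysis of a linear function. Define the difference
\[
D_t(q) \defn \utilfunc(q; \threshold_t, \reward_t, \cost_t) - \utilfunc(q; \threshold_{t-1}, \reward_{t-1}, \cost_{t-1}).
\]
Using the closed-form utility in~\eqref{EqnOptInUtility}, a direct calculation gives
\[
D_t(q) = q\bigl[\reward_{t-1}\Delta_{t-1} - \reward_t \Delta_t\bigr] + \bigl[\reward_t \altAppSimple{\threshold_t} - \reward_{t-1}\altAppSimple{\threshold_{t-1}} + \cost_{t-1} - \cost_t\bigr],
\]
which is affine in $q$. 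The two claims~\eqref{eq:ic_for_good}–\eqref{eq:ic_for_bad} become $D_t(\type{t-1}) < 0$ and $D_t(\type{t}) > 0$, respectively.

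The first step will be to verify that the slope of $D_t$ in $q$ is strictly positive. Substituting the recursive reward update $\reward_{t-1} = \reward_t \Delta_t / \Delta_{t-1} + \eps{t}$ from~\eqref{eq:finite_menu_reward} gives
\[
\reward_{t-1}\Delta_{t-1} - \reward_t\Delta_t \; = \; \eps{t}\Delta_{t-1},
\]
which is strictly positive because $\eps{t} > 0$ by assumption and $\Delta_{t-1} > 0$ by the non-trivial power condition~\eqref{eq:nontrivial_power}. In particular $D_t$ is strictly increasing in $q$, so the two required sign conditions can hold simultaneously only if $D_t$ has a zero in the open interval $(\type{t-1}, \type{t})$.

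The second step will be to translate these two sign conditions into bounds on $\cost_{t-1}$. Setting $D_t(\type{t-1}) < 0$ and solving for $\cost_{t-1}$ yields $\cost_{t-1} < \intright_t$; symmetrically, $D_t(\type{t}) > 0$ yields $\cost_{t-1} > \intleft_t$. These are exactly the endpoints defined in the statement preceding the recursion, which was precisely the design principle that motivated their definition. Hence $\cost_{t-1}$ chosen in the interior of $[\intleft_t, \intright_t]$, as prescribed by~\eqref{eq:finite_menu_cost}, yields both strict inequalities simultaneously; the fact that $\intleft_t < \intright_t$ (so this interval is non-degenerate) follows from the positivity of the slope computed above together with $\type{t-1} < \type{t}$.

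There is no deep obstacle here; the work is essentially a matter of unwinding the recursive definitions and checking the signs. The one point deserving care is strictness: the interval $[\intleft_t, \intright_t]$ is written as closed in the construction, and the endpoints correspond precisely to the indifference boundaries for types $\type{t}$ and $\type{t-1}$, so strict inequalities~\eqref{eq:ic_for_good}–\eqref{eq:ic_for_bad} require interpreting the selection as the open interior (or equivalently breaking ties in favor of the designated type). Once this convention is in place, the algebraic identities above complete the proof.
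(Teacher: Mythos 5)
Your proof is correct and takes essentially the same route as the paper's: both exploit the linearity of $\utilfunc(\cdot;\threshold_t,\reward_t,\cost_t)$ in the prior, translate the two strict inequalities into the bounds $\intleft_t < \cost_{t-1} < \intright_t$, and use the reward recursion (which makes your slope equal to $\eps{t}\Delta_{t-1} > 0$) to certify that this interval is non-degenerate. Your closing caveat is well taken---the construction states a closed interval $[\intleft_t, \intright_t]$ while the strict inequalities of the lemma require the open interior (the endpoints correspond to indifference for types $\type{t}$ and $\type{t-1}$), a point the paper's own proof implicitly handles by deriving strict bounds on $\cost_{t-1}$.
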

\noindent See~\Cref{SecProoflem:local_incentive_compatible} for the proof.\\

\noindent Our second lemma formalizes an important type of monotonic
preference structure in the contracts, ensuring that no agent prefers
a contract intended for a more distant type.
\begin{lemma}
\label{lem:ICOrdering}
The contracts in any menu constructed by the iterative procedure
satisfy the properties
\begin{subequations}
  \begin{align}
    \label{eq:IC_worseOrder}
    \utilfunc(\priorNull;\threshold_t, \reward_t, \cost_t) &>
    \utilfunc(\priorNull;\threshold_{t-1}, \reward_{t-1}, \cost_{t-1})
    \quad \mbox{for all} \quad \priorNull \geq \type{t}, \\
\label{eq:IC_betterOrder}    
\utilfunc(\priorNull;\threshold_t, \reward_t, \cost_t) &<
\utilfunc(\priorNull;\threshold_{t-1}, \reward_{t-1}, \cost_{t-1})
\quad \mbox{for all} \quad \priorNull \leq \type{t-1}.
\end{align} 
\end{subequations}
\end{lemma}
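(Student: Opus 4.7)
The plan is to reduce both inequalities~\eqref{eq:IC_worseOrder} and~\eqref{eq:IC_betterOrder} to properties of a single linear function of $\priorNull$, namely the utility difference
\begin{align*}
\Delta U(\priorNull) \defn \utilfunc(\priorNull; \threshold_t, \reward_t, \cost_t) - \utilfunc(\priorNull; \threshold_{t-1}, \reward_{t-1}, \cost_{t-1}).
\end{align*}
From the explicit formula~\eqref{EqnOptInUtility}, the utility under each fixed contract is an affine function of $\priorNull$ with slope $-\reward_s \Delta_s$, where $\Delta_s = \altAppSimple{\threshold_s} - \nullAppSimple{\threshold_s}$. Consequently $\Delta U$ is itself affine in $\priorNull$, with slope equal to $\reward_{t-1}\Delta_{t-1} - \reward_t \Delta_t$.

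First I would certify that this slope is strictly positive. The recursive reward update~\eqref{eq:finite_menu_reward} gives $\reward_{t-1}\Delta_{t-1} = \reward_t \Delta_t + \eps{t}\Delta_{t-1}$; since $\eps{t} > 0$ by hypothesis and $\Delta_{t-1} > 0$ by the non-trivial power assumption~\eqref{eq:nontrivial_power}, it follows that $\Delta U$ is strictly increasing in $\priorNull$.

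Next I would pin down the sign of $\Delta U$ at the two boundary types using~\Cref{lem:local_incentive_compatible}: property~\eqref{eq:ic_for_bad} yields $\Delta U(\type{t}) > 0$, and property~\eqref{eq:ic_for_good} yields $\Delta U(\type{t-1}) < 0$. Combining strict monotonicity with these endpoint sign conditions, for any $\priorNull \geq \type{t}$ we have $\Delta U(\priorNull) \geq \Delta U(\type{t}) > 0$, which is~\eqref{eq:IC_worseOrder}; and for any $\priorNull \leq \type{t-1}$ we have $\Delta U(\priorNull) \leq \Delta U(\type{t-1}) < 0$, which is~\eqref{eq:IC_betterOrder}.

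I do not anticipate a substantive obstacle here: the heavy lifting is done by~\Cref{lem:local_incentive_compatible} and the reward recursion. The geometric interpretation is that the recursion enforces a strictly increasing sequence of slope magnitudes $\reward_s \Delta_s$ as $s$ decreases, which encodes the discrete convexity of the induced function $\Gfun$; the two adjacent utility lines therefore cross exactly once in the open interval $(\type{t-1}, \type{t})$, with the ``better-type'' contract $t-1$ dominating to the left of the crossing and the ``worse-type'' contract $t$ dominating to the right.
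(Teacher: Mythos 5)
Your proof is correct and follows essentially the same route as the paper: both arguments exploit linearity of $\utilfunc(\cdot;\threshold_s,\reward_s,\cost_s)$ in $\priorNull$, the slope ordering $\reward_{t-1}\Delta_{t-1} > \reward_t\Delta_t$ coming from the reward recursion, and the adjacent-type incentive compatibility from \Cref{lem:local_incentive_compatible} evaluated at $\type{t}$ and $\type{t-1}$. Packaging this as the affine difference $\Delta U(\priorNull)$ is only a cosmetic reorganization of the paper's anchored-at-$\type{t}$ (respectively $\type{t-1}$) comparison, so there is nothing substantive to flag.
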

\noindent See~\Cref{SecProoflem:ICOrdering} for the proof. \\

With these two lemmas in place, we are now prepared to prove the
proposition itself. \Cref{lem:local_incentive_compatible} ensures that
any pair of adjacent contracts is locally incentive-compatible.  Our
first step is to leverage this property so as to establish
\emph{global} incentive compatibility, meaning that for each $t \in
     [\typeNum]$, we have
\begin{align}
\label{EqnGlobal}       
 \utilfunc(\type{t}; \threshold_{t}, \reward_{t}, \cost_{t}) >
 \utilfunc(\type{t}; \threshold_{s}, \reward_{s}, \cost_{s}) \quad
 \mbox{for all} \quad s \in [\typeNum] \backslash \{t\}.
\end{align}

To establish this property, first consider some $s \leq t$ and hence the contract $(\threshold_{s}, \reward_{s}, \cost_{s})$ intended for $\npriorNull{s} \leq \npriorNull{t}$. Equation~\eqref{eq:IC_worseOrder}
from~\Cref{lem:ICOrdering} implies that for agent type $ \npriorNull{t}$,
\begin{align*}
\utilfunc(\type{t}; \threshold_{s}, \reward_{s}, \cost_{s}) >
\utilfunc(\type{t}; \threshold_{s-1}, \reward_{s-1}, \cost_{s-1}).
\end{align*}
Chaining together relations of this type yields
\begin{align*}
    \utilfunc(\type{t}; \threshold_{t}, \reward_{t}, \cost_{t}) >
    \utilfunc(\type{t}; \threshold_{s}, \reward_{s}, \cost_{s}) \quad \mbox{for all} \quad s < t.
\end{align*}
Similarly, for any $s > t$, inequality~\eqref{eq:IC_betterOrder}
from~\Cref{lem:ICOrdering} implies that for agent type $\npriorNull{t}$,
\begin{align*}
\utilfunc(\type{t}; \threshold_{s-1}, \reward_{s-1}, \cost_{s-1}) >
\utilfunc(\type{t}; \threshold_{s}, \reward_{s}, \cost_{s}).
\end{align*}
Chaining these relations yields
\begin{align*}
\utilfunc(\type{t}; \threshold_{t}, \reward_{t}, \cost_{t}) >
\utilfunc(\type{t}; \threshold_{s}, \reward_{s}, \cost_{s}) \quad \mbox{for all} \quad s > t.
\end{align*}
Thus, the global incentive compatibility constraint is satisfied for
each agent type.

Now, we show that the constructed menu satisfies the participation
constraints, i.e., for each agent of type $\type{t}$, we have
$\utilfunc(\type{t}; \threshold_t, \reward_t, \cost_t) \geq 0$.  By
assumption, this inequality holds trivially for agent type
$\type{\numType}$. For any $t < \numType$, by the
incentive compatibility constraints, we have
\begin{align*}
\utilfunc(\type{t}; \threshold_t, \reward_t, \cost_t) >
\utilfunc(\type{t}; \threshold_\numType, \reward_\numType,
\cost_\numType).
\end{align*}
Since $\type{t} < \type{\numType}$ for all $t < \numType$ and the
function $\priorNull \mapsto \utilfunc(\priorNull;
\threshold_\numType, \reward_\numType, \cost_\numType)$ is monotonically
decreasing in $\priorNull$, we have
\begin{align*}
    \utilfunc(\type{t}; \threshold_\numType, \reward_\numType,
    \cost_\numType) > \utilfunc(\type{\numType}; \threshold_\numType,
    \reward_\numType, \cost_\numType) \geq 0.
\end{align*}
Consequently, we conclude that $\utilfunc(\type{t}; \threshold_t,
\reward_t, \cost_t) \geq 0$ for all $t < \numType$, as claimed.


\subsubsection{Proof of \texorpdfstring{\Cref{lem:local_incentive_compatible}}{Lemma 1}}
\label{SecProoflem:local_incentive_compatible}

Recalling the shorthand $\Delta_t \defn \altAppSimple{\threshold_t} -
\nullAppSimple{\threshold_t}$, the utility
function~\eqref{EqnOptInUtility} takes the form
\begin{align*}
\utilfunc(\priorNull; \threshold_t, \reward_t, \cost_t) & =
-\priorNull \reward_t \Delta_t +
\big[\reward_t\altAppSimple{\threshold_t} - \cost_t \big].
\end{align*}
We substitute this utility function into
equations~\eqref{eq:ic_for_good} and~\eqref{eq:ic_for_bad}; by
re-arranging, we find that the cost $\cost_{t-1}$ must satisfy the
upper and lower bounds
\begin{subequations}
  \begin{align*}
    \cost_{t-1} & < \intright_t \defn \type{t-1} \Big(\reward_t
    \Delta_t - \reward_{t-1} \Delta_{t-1} \Big) + \Big[\reward_{t-1}
      \altAppSimple{\threshold_{t-1}} - \reward_t
      \altAppSimple{\threshold_t} + \cost_t \Big], \quad \mbox{and} \\
    \cost_{t-1} & > \intleft_t \defn \type{t} \Big(\reward_t \Delta_t
    -\reward_{t-1} \Delta_{t-1}\Big) + \Big[\reward_{t-1}
      \altAppSimple{\threshold_{t-1}} -
      \reward_t\altAppSimple{\threshold_t} + \cost_t \Big].
\end{align*}
\end{subequations}
In order to guarantee the existence of such $\cost_{t-1}$, we require
\begin{align*}
\type{t-1} \Big(\reward_t\Delta_t -\reward_{t-1}\Delta_{t-1} \Big) >
\type{t} \Big(\reward_t\Delta_t -\reward_{t-1}\Delta_{t-1}\Big).
\end{align*}
Since $\type{t} > \type{t-1}$, this condition can be satisfied by
requiring that
\begin{align}
\label{EqnMina}
\reward_t \Delta_t - \reward_{t-1} \Delta_{t-1} < 0.
\end{align}
Recall that our non-trivial power assumption ensures that
$\altAppSimple{\threshold} > \nullAppSimple{\threshold}$ for all
$\threshold \in [0,1]$; as a consequence, we have $\Delta_t > 0$ and
$\Delta_{t-1} > 0$.  Re-arranging yields the equivalent inequality
$\reward_{t-1} > \reward_t \frac{\Delta_t}{\Delta_{t-1}}$, which is
satisfied by setting
\begin{align*}
  \reward_{t-1} = \reward_t \frac{\Delta_t}{\Delta_{t-1}} + \eps{t}
  \qquad \mbox{for some $\eps{t} > 0$.}
\end{align*}
This completes the proof.


\subsubsection{Proof of \texorpdfstring{\Cref{lem:ICOrdering}}{Lemma 2}}
\label{SecProoflem:ICOrdering} 
              
Recall that the expected utility function $\priorNull \mapsto
\utilfunc(\priorNull; \threshold_t, \reward_t, \cost_t)$ is linear in
$\priorNull$ and has a negative slope $-\reward_t\Delta_t$.
\Cref{lem:local_incentive_compatible} ensures that $
-\reward_t\Delta_t > -\reward_{t-1}\Delta_{t-1}$.  Using the linear
structure, we can write
\begin{align*}
\utilfunc(\priorNull;\threshold_t, \reward_t, \cost_t) &=
\utilfunc(\type{t};\threshold_t, \reward_t, \cost_t) - (\priorNull -
\type{t}) \reward_t \Delta_t \\
\utilfunc(\priorNull;\threshold_{t-1}, \reward_{t-1}, \cost_{t-1}) &=
\utilfunc(\type{t};\threshold_{t-1}, \reward_{t-1}, \cost_{t-1}) -
(\priorNull - \type{t}) \reward_{t-1} \Delta_{t-1}
\end{align*}
Now for any prior null probability $\priorNull \geq \type{t}$, we have
\begin{align*}
  -(\priorNull - \type{t}) \reward_t\Delta_t > -(\priorNull -
  \type{t}) \reward_{t-1} \Delta_{t-1}.
\end{align*}
Given incentive compatibility, i.e., $\utilfunc(\type{t};\threshold_t,
\reward_t, \cost_t) > \utilfunc(\type{t};\threshold_{t-1},
\reward_{t-1}, \cost_{t-1})$, we conclude that
\begin{align*}
\utilfunc(\priorNull;\threshold_t, \reward_t, \cost_t) >
\utilfunc(\priorNull;\threshold_{t-1}, \reward_{t-1}, \cost_{t-1})
\quad \mbox{for all} \quad \priorNull \geq \type{t}.
\end{align*}
The proof of \cref{eq:IC_betterOrder} is similar. Using the linear
structure, we can write
\begin{align*}
\utilfunc(\priorNull;\threshold_t, \reward_t, \cost_t) &=
\utilfunc(\type{t-1};\threshold_t, \reward_t, \cost_t) - (\priorNull -
\type{t-1}) \reward_t \Delta_t \\
\utilfunc(\priorNull;\threshold_{t-1}, \reward_{t-1}, \cost_{t-1}) &=
\utilfunc(\type{t-1};\threshold_{t-1}, \reward_{t-1}, \cost_{t-1}) -
(\priorNull - \type{t-1}) \reward_{t-1} \Delta_{t-1}
\end{align*}
For all $\priorNull \leq \type{t-1}$, we have $- (\priorNull -
\type{t-1}) \reward_{t} \Delta_{t} < - (\priorNull - \type{t-1})
\reward_{t-1} \Delta_{t-1}$.  Combining this inequality with
incentive compatibility yields
\begin{align*}
  \utilfunc(\priorNull;\threshold_t, \reward_t, \cost_t) <
  \utilfunc(\priorNull;\threshold_{t-1}, \reward_{t-1}, \cost_{t-1})
  \quad \mbox{for all} \quad \priorNull \leq \type{t-1},
\end{align*}
which completes the proof.


\end{document}